\documentclass[pra,showkeys,showpacs,superscriptaddress,nofootinbib,twocolumn,10pt]{revtex4-1}
\usepackage{graphicx}  % needed for figures
\usepackage{amsthm}
\usepackage{amsmath,amssymb,amsfonts,bbm,MnSymbol,mathrsfs,xfrac}   % for math
\usepackage[breaklinks=true,colorlinks=true,linkcolor=blue,urlcolor=blue,citecolor=blue]{hyperref}
\usepackage{comment}
\usepackage{color}
\usepackage[makeroom]{cancel}
\pdfoutput=1

\interfootnotelinepenalty=10000

\renewcommand{\[}{\begin{equation}}
\renewcommand{\]}{\end{equation}}
\renewcommand{\Re}{\mathfrak{Re}}

\newcommand{\ket}[1]{|#1\rangle}
\newcommand{\bra}[1]{\langle#1|}
\newcommand{\braket}[2]{\langle#1|#2\rangle}
\newcommand{\pro}[2]{|#1\rangle\langle#2|}
\newcommand{\mean}[1]{\langle#1\rangle}
\newcommand{\abs}[1]{|#1|}

\newcommand{\tr}{\mathrm{tr}}

\newcommand{\norm}[1]{\left\lvert\left\lvert#1\right\rvert\right\rvert}

\newcommand{\R}{\hat{\rho}}
\renewcommand{\L}{\hat{L}}

\newcommand{\HS}{\mathcal{H}}

\newcommand{\de}{{\mathrm{d}\epsilon}}
\newcommand{\dR}{{\mathrm{d}\R}}

\newcommand{\bu}{{\boldsymbol{u}}}

\newcommand{\be}{{\boldsymbol{\epsilon}}}
\newcommand{\bdeps}{{\mathrm{d}\!\!\;\boldsymbol{\epsilon}}}

\newtheorem{theorem}{Theorem}
\newtheorem{corollary}{Corollary}
\newtheorem{example}{Example}

\begin{document}

\title{Discontinuities of the quantum Fisher information and the Bures metric}

\author{Dominik \v{S}afr\'{a}nek}
\email{dsafrane@ucsc.edu}
%\ead{dsafrane@ucsc.edu}
\address{SCIPP, University of California, Santa Cruz, Natural Sciences II No.~337, 1156 High Street, Santa Cruz, California 95064, USA}

\date{\today}

\begin{abstract}
We show that two quantities in quantum metrology that were thought to be the same, the quantum Fisher information matrix and the Bures metric, are not the same. They differ at points at which the rank of the density matrix changes. The quantum Fisher information matrix is discontinuous at these points. However, these discontinuities are removable in some sense. We show that the expression given by the Bures metric represents the continuous version of the quantum Fisher information matrix. We also derive an explicit formula for the Bures metric for both singular and non-singular density matrices.
\end{abstract}

\pacs{02.40.−k, 02.50.−r, 03.65.Ta, 03.67.−a, 05.70.Fh, 06.20.−f}
\keywords{quantum metrology, Bures metric, discontinuity, quantum phase transition}

\maketitle

\section{Introduction}

The quantum Fisher information and the Bures metric are cornerstones of modern quantum metrology and quantum information geometry. %and they have applications in many other fields.
They give the ultimate precision bound on the estimation of a parameter encoded in a quantum state known as the Cram\'er-Rao bound\cite{BraunsteinCaves1994a,Paris2009a}. This bound gives the theoretical framework for maximizing sensitivity of new-era quantum detectors such as recently improved~\cite{aasi2013enhanced} gravitational wave detector LIGO that confirmed the last missing piece in the Einstein's theory of relativity~\cite{abbott2016observation}. The quantum Fisher information and the Bures metric have been also used in the description of criticality and quantum phase transitions under the name of `fidelity susceptibility' where they help to describe a sudden change of a quantum state when an external parameter such as temperature is varied~\cite{paraoanu1998bures,zanardi2007bures,venuti2007quantum,gu2010fidelity}. Last but not least, these measures also give the speed limits on the evolution of quantum states~\cite{margolus1998maximum,pires2016generalized}. They have been used for example to estimate the speed limits of quantum computation~\cite{lloyd2000ultimate} or speed limits in charging of batteries~\cite{binder2015quantacell}.

Introduced by Holevo~\cite{Holevo2011a}, Helstrom~\cite{helstrom1967minimum,helstrom1976quantum}, by Bures~\cite{Bures1969a}, and later popularized by Braunstein and Caves~\cite{BraunsteinCaves1994a}, the quantum Fisher information and the Bures metric describe limits in distinguishability of infinitesimally close quantum states $\R_\be$ and $\R_{\be+\bdeps}$ that differ only by a small variation in parameters that parametrize them. To explain, assume we perform a measurement on these two states to distinguish them. We obtain two different statistics of measurement outcomes, and how well we can distinguish between these two statistics is given by a measure known as the Fisher information. Since statistics of measurement outcomes depend on both the quantum states and the chosen measurement, and because some measurements can lead to statistics that are easier to distinguish, to obtain the ultimate precision with what we can distinguish between the  two close states we have to optimize over all such measurements. This then gives rise to the quantum Fisher information, which is a function only of the density matrix $\R_\be$~\cite{Paris2009a}. Since the density matrix depends on the parameter to be estimated, distinguishing between two close density matrices $\R_\be$ and $\R_{\be+\bdeps}$ is equivalent to distinguishing between two close parameters $\be$ and $\be+\bdeps$ that parametrize them. As a result, the quantum Fisher information measures how well the parameter $\be$ itself can be estimated.

Despite the fact that both the quantum Fisher information and the Bures metric have been widely used before, they still contain a large number of strange and unexplored properties. For example, although these quantities are widely believed to be equal, finding the true connection between them is rather elusive. It has been shown that these quantities are the same when two infinitesimally close states that are being compared are pure~\cite{wootters1981statistical}, or when they are both described by a full rank density matrix~\cite{hubner1992explicit,BraunsteinCaves1994a}. It has been suggested that this is also true when the density matrices are of arbitrary rank~\cite{sommers2003bures}; however, we will show that this is not true in general.

The quantum Fisher information and the Bures metric also exhibit mathematical features that make them difficult to handle and that are uncommon in physics. Expressions for the quantum Fisher information or the Bures metric that are valid for a certain quantum state are often undefined  for a state of a lower rank. Formulas for the full-rank or one-rank density matrices are usually easy to obtain but hard to connect to each other even when using appropriate limits, and deriving expressions for density matrices of an arbitrary rank is much harder. To connect known expressions or to derive new ones unintuitive regularization procedures have to be employed~\cite{Monras2013a,Safranek2015b}, or uncommon operators such as the Moore-Penrose pseudoinverse have to be introduced~\cite{Monras2013a}. Due to these difficulties it is common in literature that discussions of glaringly pathological behavior such as $\tfrac{0}{0}$ of the derived expressions are often omitted. It is implicitly assumed that such expressions are either invalid when they are undefined, or that such expressions are still valid and the pathological terms are either set to zero~\cite{Pinel2013b} or to some other value depending on a particular limit involved~\cite{Safranek2015b}. Performing those limits shows that expressions for the quantum Fisher information exhibit strange jumps (discontinuities) when a mixed state approaches a pure state, suggesting that the physics of pure states should differ from the physics of mixed states, a surprising statement on its own.

In this paper we study and describe this strange behavior and expose places where we can expect discontinuities in the figures of merit even for density matrices that are analytical functions of the estimated parameters. We find that these discontinuities happen when a small change in the parameter of the density matrix changes the rank of the density matrix. This is also when two figures of merit of the local estimation theory --- the quantum Fisher information matrix and the Bures metric --- do not coincide. Such scenarios are common when estimating noise in a quantum system~\cite{monras2007optimal,Invernizzi2011a,spedalieri2016thermal} or when the parameter that we want to estimate is encoded into some larger quantum state while an experimentalist has access only to a smaller subsystem~\cite{gambetta2001state,tsang2013quantum,alipour2014quantum}. This is inevitable in a quantum field theory in curved space-time because there are infinitely many modes that need to be traced over~\cite{aspachs2010optimal,dragan2011quantum,Wang2014a,Tian2015a,Safranek2015a,kish2016estimating}. In all of these scenarios the change in the parameter changes the purity of a state. Therefore the rank of the density matrix can also change, which ultimately leads to a discontinuity.

This paper is structured as follows. We first give the necessary background and we review literature published on the topic. Then we present our main results: the relation between the quantum Fisher information matrix and the Bures metric (theorem~\ref{thm:theorem1}, corollary~\ref{cor:equality_conditions_for_QFIs}, and theorem~\ref{thm:theorem2}); continuity of the Bures metric (theorem~\ref{thm:theorem2}); discontinuities of the quantum Fisher information matrix and the Bures metric (corollary~\ref{cor:discontinuous_QFI} and theorem~\ref{thm:discontinuities}); and an expression for the quantum Fisher information matrix of any state as a limit of the quantum Fisher information matrix of a mixed state (theorem~\ref{thm:regularization_procedure}). We accompany our text by three examples and four figures for better understanding. Finally, we discuss possible physical interpretations of points of discontinuity and conclude.

\section{Background}

%Notation
We use the following notation: We denote the vector of parameters as $\be=(\epsilon_1,...,\epsilon_n)$, and we denote the density matrix dependent on this vector as $\R_{\be}$. If a symbol with an index appears under the sum, the sum goes over all values of the index such that the property is satisfied. For example, $\sum_{p_k>0}$ means that the sum goes over all $k$ such that $p_k>0$. If there is no condition present, the sum goes over all indices written under the sum. We also usually drop writing the explicit dependence on the vector of parameters $\be$ unless we want to stress out this dependence. For example, instead of $p_i(\be)$ we often write only $p_i$, but for $p_i(\be+\bdeps)$ we write the full form. $\bdeps=(\de_1,...,\de_n)$ denotes a small variation in vector $\be$. We denote partial derivatives as $\partial_i\equiv\partial_{\epsilon_i}$ and $\partial_{ij}\equiv\partial_{\epsilon_i}\partial_{\epsilon_j}$. Derivatives with respect to elements of $\bdeps$ will be denoted as $\partial_{\de_i}$ for the first derivatives, and $\partial_{\de_i\de_j}$ for the second derivatives. Elements of a matrix will be denoted by upper indices, e.g.,~$H^{ij}$, while different matrices or operators will be denoted by lower indices, e.g.~$\L_i$. We also write the spectral decomposition of the density matrix as
\[\label{def:spectral_decomposition}
\R_{\be}=\sum_kp_k\pro{k}{k}.
\]

%SLD
We define symmetric logarithmic derivatives $\L_i$~\cite{Paris2009a} as operator solutions to equations
\[\label{def:SLDs}
\frac{1}{2}\big(\L_{i}\R_{\be}+\R_{\be} \L_{i}\big)=\partial_{i}\R_{\be}.
\]
The quantum Fisher information matrix is then a symmetric positive or a positive semi-definite matrix defined as~\cite{Paris2009a}
\[\label{def:Information_matrix}
H^{ij}(\boldsymbol{\epsilon}):=\frac{1}{2}\tr\left[(\L_{i}\L_{j}+\L_{j}\L_{i})\R_{\be}\right].
\]
Using the spectral decomposition~\eqref{def:spectral_decomposition} of the density matrix it is relatively easy\footnote{Assuming $\R_{\be}\in C^{(1)}$, inserting $\L_i$ to Eq.~\eqref{def:SLDs} gives the left hand side $LHS=\sum_{p_k+p_l>0}\bra{k}\partial_{i}\R_{\be}\ket{l}\pro{k}{l}$ which together with $\sum_{p_k=p_l=0}\bra{k}\partial_{i}\R_{\be}\ket{l}\pro{k}{l}=0$ gives the right hand side. The second identity comes from the fact that for $\be$ such that $p_k(\be)=0$, also $\partial_i p_k(\be)=0$ because $p_k$ reaches the local minimum at point $\be$.} to check that Eqs.~\eqref{def:SLDs} have solutions ${\L_i=2\sum_{p_k+p_l> 0}\frac{\langle k|\partial_i\R_{\epsilon}|l\rangle}{p_k+p_l} |k\rangle\langle l|}$. Inserting these expressions into Eq.~\eqref{def:Information_matrix} gives the quantum Fisher information matrix,
\begin{equation}\label{QFI_multi}
H^{ij}(\be)=2\!\!\!\!\sum_{p_k+p_l> 0}\!\!\!\!\frac{\Re(\langle k|\partial_i\R_{\be}|l\rangle\langle l|\partial_j\R_{\be}|k\rangle)}{p_k+p_l},
\end{equation}
where $\Re$ denotes the real part.

%the quantum Fisher information matrix and mention that this quantity is the right figure of merit in the cramer rao bound (see appendix)
The quantum Fisher information matrix is the figure of merit in the multi-parameter quantum Cram\'er-Rao bound which gives a lower bound on the covariance matrix of the vector of locally unbiased estimators $\hat{\be}$~\cite{szczykulska2016multi,Paris2009a},
\[
\mathrm{Cov}[\hat{\boldsymbol{\epsilon}}]\geq{H}^{-1}(\boldsymbol{\epsilon}).
\]
$\mathrm{Cov}[\hat{\boldsymbol{\epsilon}}]=\mean{\hat{\epsilon}_i\hat{\epsilon}_j}-\mean{\hat{\epsilon}_i}\mean{\hat{\epsilon}_j}$ is the covariance matrix and $H^{-1}(\boldsymbol{\epsilon})$ the inverse of the matrix defined in Eq.~\eqref{def:Information_matrix}. The above equation should be understood as an operator inequality. It states that $\mathrm{Cov}[\hat{\boldsymbol{\epsilon}}]-{H}^{-1}$ is a positive semi-definite or a positive definite matrix. %For more information see~\cite{szczykulska2016multi}.

%The Bures metric
In this paper we study the connection between the quantum Fisher information matrix and the Bures metric (also known the ``statistical distance'' in older literature~\cite{wootters1981statistical,BraunsteinCaves1994a} and as the ``fidelity susceptibility'' in the condensed matter theory literature~\cite{venuti2007quantum}). To define the Bures metric~\cite{Bures1969a} we first introduce the Bures distance. The Bures distance is a measure of distinguishability between two quantum states $\R_{1,2}$ and it is defined through the Uhlmann fidelity~\cite{Uhlmann1976a}
\[\label{def:Uhlmann_Fidelity}
\mathcal{F}({\R}_{1},{\R}_{2})\,:=\,\Big(\tr\sqrt{\sqrt{{\R}_{1}}\,{\R}_{2}\,\sqrt{{\R}_{1}}}\Big)^{2}
\]
as
\[\label{def:bures_distance}
d_B^2(\R_1,\R_2)=2\big(1-\sqrt{\mathcal{F}(\R_1,\R_2)}\big).
\]
The Bures distance gives rise to the Bures metric $g^{ij}$ through the definition for the line element,
\[\label{eqn:bures}
\sum_{i,j}g^{ij}(\be)\mathrm{d}\epsilon_i\mathrm{d}\epsilon_j:=d_B^2(\R_{\be},\R_{\be+\bdeps}).
\]
This definition shows that the Bures metric measures the amount of distinguishability between two close density matrices $\R_{\be}$ and $\R_{\be+\bdeps}$ in the coordinate system $\be$. Precisely speaking, the above equation defines a metric tensor (or simply metric) $g^{ij}$ induced by the Bures distance. The coordinate system $\be$ is not required to describe the entire manifold of density matrices but can rather define a submanifold. The metric~\eqref{eqn:bures} is then an induced metric on this submanifold. We define quantity $H_c$ (which we will later call the continuous quantum Fisher information matrix) as four times the Bures metric,
\[\label{def:continuous_QFI}
H_c:=4g.
\]

The connection between the quantum Fisher information matrix and the Bures metric was extensively studied in literature, particularly in several papers deriving explicit formulas for the statistical distance, the Bures metric, or the infinitesimal Bures distance~\cite{Bures1969a,helstrom1976quantum,Holevo2011a,Uhlmann1976a,wootters1981statistical,hubner1992explicit,hubner1993computation,BraunsteinCaves1994a,Jozsa1994a,slater1996quantum,slater1998bures,dittmann1999explicit,sommers2003bures,zanardi2007bures,Paris2009a}. We also point out papers related to the continuity of the quantum Fisher information~\cite{wang2015discontinuity,rezakhani2015continuity}.

There are three papers directly related to our study. In the first paper~\cite{BraunsteinCaves1994a} Braunstein and Caves generalized the notion of the statistical distance from pure states to mixed states by maximizing the Fisher information over all possible quantum measurements. In today's terms, the resulting statistical distance is an equivalent of the quantum Fisher information defined in Eq.~\eqref{QFI_multi}. Moreover, it was noted in~\cite{BraunsteinCaves1994a} that the derived expression of the statistical distance is proportional to the infinitesimal Bures distance that was explicitly computed by H\"ubner~\cite{hubner1992explicit}. What was not mentioned, however, is that the results of paper~\cite{hubner1992explicit} are applicable only to non-singular density matrices.\footnote{H\"ubner states ``We assume $A(0)=\rho$ invertible.''} It is discussed there that in the case when $\R$ becomes singular the metric can be regularized by switching to a new set of coordinates and that the metric tensor $g^{ij}$ remains finite. On the other hand, paper~\cite{hubner1992explicit} does not provide an explicit expression for the infinitesimal Bures distance for the case of singular density matrices. Sommers and \.{Z}yczkowski went a bit further by considering also singular density matrices~\cite{sommers2003bures}. However the entire discussion of this topic is reduced to one sentence\footnote{The sentence being ``Note, that if $\rho_{\nu}=0$ and $\rho_{\mu}=0$, $\delta\rho_{\nu\mu}$ does not appear and therefore terms where the denominator vanishes have to be excluded.''} which leads to incorrect conclusions.

Performing the proof in Ref.~\cite{sommers2003bures} in detail using the same argumentation reveals that the resulting expression differs from the one that was published. The infinitesimal Bures distance is actually $d_B^2(\R,\R+\dR)=\frac{1}{2}\sum_{p_k>0,p_l> 0}\frac{|\langle k|\dR|l\rangle|^2}{p_k+p_l}$ and not $d_B^2(\R,\R+\dR)=\frac{1}{2}\sum_{p_k+p_l> 0}\frac{|\langle k|\dR|l\rangle|^2}{p_k+p_l}$ as stated in the paper. Intuitively, the reason why extra terms given by $p_k>0,p_l=0$ and $p_k=0,p_l>0$ do not appear in the sum can be understood in the following way: the argument $d_B^2(\R,\R+\dR)$ takes into account only the first-order correction $\R+\dR$, while the right-hand side depends on the second order $\dR^2$. The extra terms that are missing come from the second order correction to the argument. However, to obtain these extra terms it is necessary to consider the expression given by  $d_B^2(\R_{\be},\R_{\be+\bdeps})$ from Eq.~\eqref{eqn:bures} instead of $d_B^2(\R,\R+\dR)$, as has been done in Ref.~\cite{sommers2003bures} (see Appendix~\ref{app} for more detail).

%Motivationary example
\begin{figure}[t!]
  \centering
\includegraphics[width=0.95\linewidth]{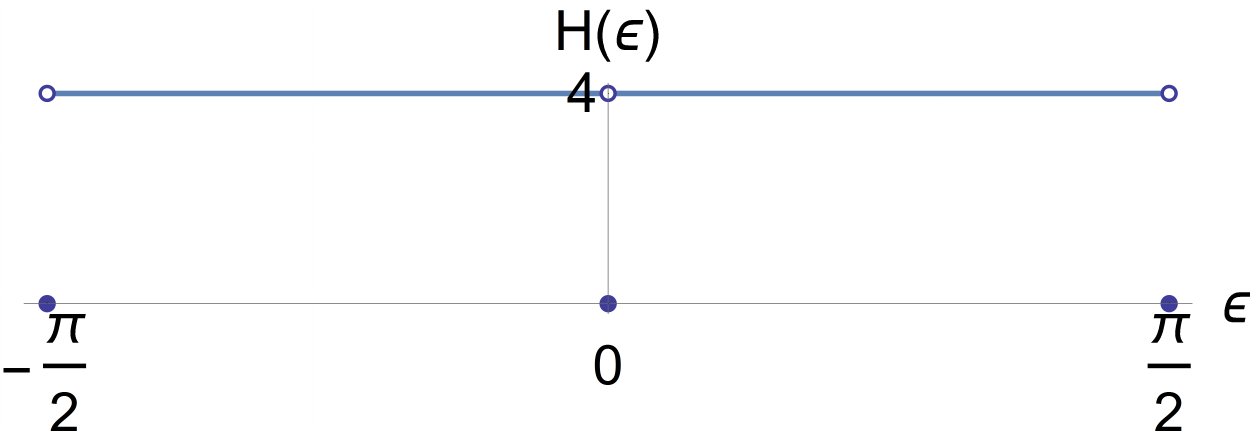}\\
\vspace{0.5cm}
  \includegraphics[width=0.95\linewidth]{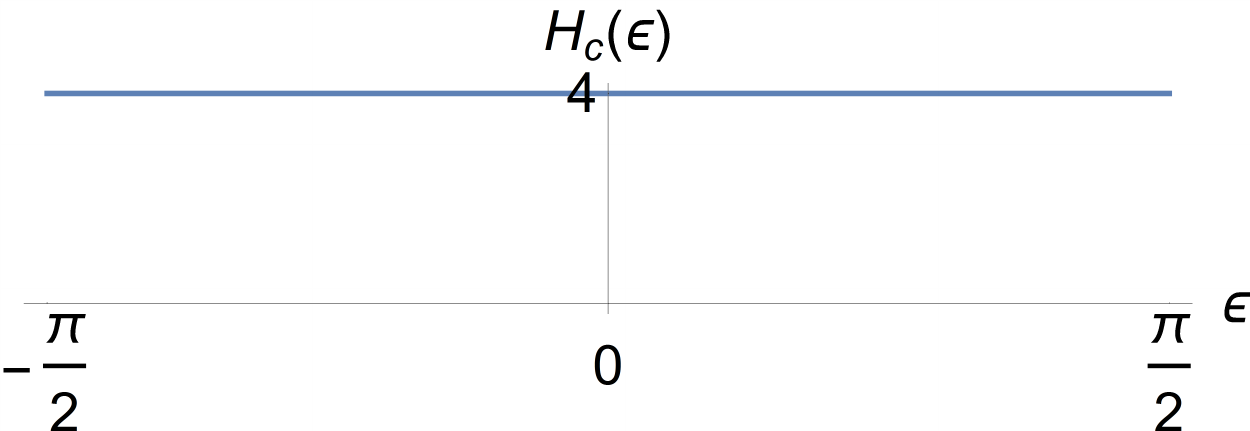}
  \caption{The quantum Fisher information $H$ and the (four times) Bures metric $H_c$ give different results for the same density matrix~\eqref{eq:discontinuous_example0}.}\label{fig:discrepancy}
\end{figure}

We will show in detail that even when considering the right figure of merit $d_B^2(\R_{\be},\R_{\be+\bdeps})$ for the Bures metric, there is still a discrepancy between the Bures metric and the quantum Fisher information matrix in certain cases. To motivate this paper we illustrate this discrepancy in the following example.

\begin{example}\label{ex:example1}
Consider a state where the parameter we estimate characterizes the purity of a quantum state,
\[\label{eq:discontinuous_example0}
\R_{\epsilon}=\sin^2\!\!\!\; \epsilon\,\pro{0}{0}+\cos^2\!\!\!\; \epsilon\,\pro{1}{1}.
\]
The fidelity between two close states can be easily calculated as $\sqrt{\mathcal{F}(\R_{\epsilon},\R_{\epsilon+\de})}=|\sin(\epsilon)\sin(\epsilon\!+\!\de)|+|\cos(\epsilon)\cos(\epsilon\!+\!\de)|$, which inserting into Eq.~\eqref{eqn:bures} and using definition~\eqref{def:continuous_QFI} gives a constant function
\[
H_c(\epsilon)=4.
\]
Using Eq.~\eqref{QFI_multi}, for $\epsilon\neq k\frac{\pi}{2}$, $k\in \mathbb{Z}$ we find $H(\epsilon)=4$. For $\epsilon= k\frac{\pi}{2}$ one term in the sum has its denominator equal to zero ($p_k+p_l=0$) and therefore it is not considered in the sum, while other terms are zero. Together we have
\[\label{eq:disc_H_first_ex}
H(\epsilon)=\begin{cases}
      4 & \epsilon\neq k\frac{\pi}{2} \\
      0 & \epsilon= k\frac{\pi}{2}.
   \end{cases}
\]
Graphs of functions $H$ and $H_c$ are shown in Fig.~\ref{fig:discrepancy}.
\end{example}

This example shows that although the expression given by the quantum Fisher information matrix $H$ and the four times Bures metric $H_c$ give the same results everywhere where the density matrix is full-rank (non-singular), the expressions differ at points $\epsilon$ at which an eigenvalue vanishes. As we will show in the following section, this is a completely general behavior. For parameterized quantum states $\R_{\be}$ in which a slight change in the parameter $\be$ results in an eigenvalue of the density matrix to vanish (or equivalently, results in an eigenvalue to ``pop out''), the two figures of merit do not coincide. This discrepancy will be then connected to the discontinuous behavior of the quantum Fisher information matrix.

\section{Results}

We assume $\R_{\be}\in C^{(2)}$ in all following theorems, i.e., we assume that the second derivative of the density matrix exists and that it is a continuous function. Although the first theorem could be easily modified to require only the existence of the second derivative (with its discontinuity possibly resulting in $\partial_{ij}p_k\neq\partial_{ji}p_k$), the continuity of the second derivative is crucial for other theorems that speak about continuity of the quantum Fisher information matrix and the Bures metric.

% Th 1: The Bures metric true formula
\begin{theorem}\label{thm:theorem1}
The Bures metric %(Eq.~\eqref{def:continuous_QFI})
is connected to the quantum Fisher information matrix through the relation
\[\label{eq:bures_metric}
H_c^{ij}(\be)%=4g^{ij}
=H^{ij}(\be)+2\!\!\!\!\sum_{p_k(\be)=0}\!\!\!\!\partial_{ij}p_k(\be).
\]
$p_k(\be)=0$ denotes that the sum goes over all values $k$ such that their respective eigenvalue $p_k$ vanishes at point $\be$. Defining the Hessian matrices as $\mathcal{H}_k^{ij}:=\partial_{ij}p_k$ we can also write Eq.~\eqref{eq:bures_metric} in an elegant matrix form, ${H_c=H+2\sum_{p_k=0}\mathcal{H}_k}$.
\end{theorem}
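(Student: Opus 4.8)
The plan is to compute $d_B^2(\R_{\be},\R_{\be+\bdeps})$ to second order in $\bdeps$ directly from the fidelity formula, being careful to keep \emph{all} second-order contributions --- in particular those that arise from eigenvalues that vanish at $\be$ but become nonzero at $\be+\bdeps$. First I would write the spectral decomposition $\R_{\be}=\sum_k p_k\pro{k}{k}$ and expand both the eigenvalues $p_k(\be+\bdeps)$ and the eigenprojectors to second order in $\bdeps$; using $\R_{\be}\in C^{(2)}$ this expansion is legitimate even at points where the rank changes, as long as we treat the vanishing eigenvalues carefully. The key observation, already flagged in the footnote after Eq.~\eqref{def:Information_matrix}, is that if $p_k(\be)=0$ then $\partial_i p_k(\be)=0$ (the eigenvalue attains a local minimum there, since eigenvalues of a density matrix are nonnegative), so the leading behaviour of such an eigenvalue is quadratic: $p_k(\be+\bdeps)=\tfrac12\sum_{ij}\partial_{ij}p_k(\be)\,\de_i\de_j+o(|\bdeps|^2)$.

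Next I would insert these expansions into $\mathcal{F}(\R_{\be},\R_{\be+\bdeps})=\big(\tr\sqrt{\sqrt{\R_{\be}}\,\R_{\be+\bdeps}\,\sqrt{\R_{\be}}}\big)^2$. Splitting the index set into $\{k:p_k(\be)>0\}$ and $\{k:p_k(\be)=0\}$, the block of $\sqrt{\R_{\be}}\,\R_{\be+\bdeps}\,\sqrt{\R_{\be}}$ supported on the strictly-positive eigenspace reproduces, after taking the square root and trace and then squaring, exactly the H\"ubner-type expression, whose second-order term yields $-\tfrac14 H^{ij}(\be)\,\de_i\de_j$ relative to the constant $1$ --- i.e. contributing $H^{ij}$ to $H_c^{ij}=4g^{ij}$. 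The novel part is the block on the kernel of $\R_{\be}$: there $\sqrt{\R_{\be}}$ annihilates the contribution to the matrix product, but $\tr\sqrt{\cdots}$ still sees these newly-born eigenvalues through the structure of the operator $\sqrt{\R_{\be}}\,\R_{\be+\bdeps}\,\sqrt{\R_{\be}}$ on the full space; carefully, one finds that $\tr\sqrt{\sqrt{\R_{\be}}\,\R_{\be+\bdeps}\,\sqrt{\R_{\be}}}=\sum_{p_k(\be)>0}(\text{H\"ubner terms})$ exactly, so that $\sqrt{\mathcal{F}}$ is \emph{missing} the mass $\sum_{p_k(\be)=0}p_k(\be+\bdeps)$. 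Expanding $d_B^2=2(1-\sqrt{\mathcal{F}})$ and using normalization $\sum_k p_k(\be+\bdeps)=1$ together with $p_k(\be+\bdeps)=\tfrac12\sum_{ij}\partial_{ij}p_k(\be)\,\de_i\de_j+o(|\bdeps|^2)$ for the vanishing eigenvalues then produces the extra term $+2\sum_{p_k(\be)=0}\partial_{ij}p_k(\be)$ in $H_c^{ij}$, matching Eq.~\eqref{eq:bures_metric}. Reading off the coefficient of $\de_i\de_j$ and recalling $H_c=4g$ and the definition~\eqref{eqn:bures} finishes the identification, and the matrix form $H_c=H+2\sum_{p_k=0}\mathcal{H}_k$ is immediate from $\mathcal{H}_k^{ij}=\partial_{ij}p_k$.

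The main obstacle is the bookkeeping of the kernel block: one must verify rigorously that the perturbed eigenvectors corresponding to the vanishing eigenvalues do not contribute any cross terms at second order to $\tr\sqrt{\sqrt{\R_{\be}}\,\R_{\be+\bdeps}\,\sqrt{\R_{\be}}}$ beyond what is already counted, and that degeneracies among the vanishing eigenvalues (or between a vanishing and a nonvanishing one) do not spoil the expansion. I expect this to require either a resolvent/contour-integral representation of $\sqrt{\R_{\be+\bdeps}}$ and of the square root of the sandwiched operator, or an explicit diagonalization of $\R_{\be+\bdeps}$ in a basis adapted to the $\be$-eigenspaces, plus the elementary fact that for a rank-deficient $\R_{\be}$ the quantity $\sqrt{\R_{\be}}\,\R_{\be+\bdeps}\,\sqrt{\R_{\be}}$ has support strictly inside $\mathrm{supp}(\R_{\be})$, so its nonzero spectrum is governed entirely by the positive block. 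Once that is established, the remainder of the computation is a routine second-order Taylor expansion, and comparison with Eq.~\eqref{QFI_multi} gives the claimed relation.
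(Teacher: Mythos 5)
Your overall strategy---expand the fidelity to second order in $\bdeps$, keeping the second-order contributions that H\"ubner and Sommers--\.{Z}yczkowski drop, and use $\partial_i p_k(\be)=0$ at a vanishing eigenvalue to get quadratic leading behaviour---is the same as the paper's. But the bookkeeping you propose for where each piece of Eq.~\eqref{eq:bures_metric} comes from is wrong, and the error sits exactly at the step you flag as ``the main obstacle.'' You attribute all of $H^{ij}$ to the positive block via ``the H\"ubner-type expression,'' and the extra term $2\sum_{p_k=0}\partial_{ij}p_k$ to the ``missing mass'' $\sum_{p_k(\be)=0}p_k(\be+\bdeps)$ recovered through normalization. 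In fact the H\"ubner-type (first order in $\mathrm{d}\R$) computation restricted to the support of $\R_{\be}$ yields only the truncated sum $2\sum_{p_k>0,p_l>0}\Re(\bra{k}\partial_i\R\ket{l}\bra{l}\partial_j\R\ket{k})/(p_k+p_l)$---precisely the incorrect expression~\eqref{g_incorrect} the paper warns about in Appendix~\ref{app}---and not $H^{ij}$, which also contains the cross terms with $p_k>0,\ p_l=0$. These cross terms are not negligible: for a family of pure states they are the \emph{entire} quantum Fisher information (the $p_k>0,p_l>0$ sum reduces to a single term proportional to $(\partial_i p_1)(\partial_j p_1)=0$), while your missing-mass correction also vanishes when the state stays pure, so your accounting would give $H_c=0$ for a unitary family of pure states, which is false.

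The missing cross terms come from the second-order term $\tfrac{1}{2}\sum_{m,n}\partial_{mn}\R\,\de_m\de_n$ of $\R_{\be+\bdeps}$ sandwiched inside the support of $\R_{\be}$, i.e., from $\tr[\hat{P}_{\HS_{>0}}\partial_{ij}\R]=-\tr[\hat{P}_0\partial_{ij}\R]$, and the paper's Eq.~\eqref{eq:second_term} shows that $\tr[\hat{P}_0\partial_{ij}\R]$ splits into $\sum_{p_k=0}\partial_{ij}p_k$ \emph{plus} $2\sum_{p_k>0,p_l=0}\Re(\bra{k}\partial_i\R\ket{l}\bra{l}\partial_j\R\ket{k})/(p_k+p_l)$. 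Equivalently, $\tr[\hat{P}_0\R_{\be+\bdeps}]$ is not equal to your missing mass $\sum_{p_k(\be)=0}p_k(\be+\bdeps)$ at second order: the eigenvectors belonging to nonzero eigenvalues rotate into the kernel of $\R_{\be}$ at first order, contributing $\sum_{p_k>0,p_l=0}p_k\,|\!\sum_m\braket{l}{\partial_m k}\de_m|^2$ at second order, and that is exactly the cross term. So the verification you hope will show ``no cross terms beyond what is already counted'' in fact shows the opposite; to close the gap you must carry $\partial_{ij}\R$ through the square-root equation (the anticommutator relation~\eqref{eq:discrepancy_between_proofs}) and evaluate $\tr[\hat{P}_0\partial_{ij}\R]$ explicitly, as the paper does.
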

\begin{proof}
  See Appendix~\ref{app:theorem1}.
\end{proof}

Theorem~\ref{thm:theorem1} gives an explicit formula for the Bures metric for both singular and non-singular density matrices, and it generalizes the result of~\cite{hubner1992explicit} by including singular matrices. It also shows that the (four times) Bures metric and the quantum Fisher information matrix do not coincide only at certain points $\be$ at which an eigenvalue vanishes. When the density matrix $\R_{\be}$ is full rank (non-singular), or when the change of the parameter does not result in the change of purity, for example, when the operation encoding $\be$ is a unitary operation, the (four times) Bures metric $H_c$ and the quantum Fisher information matrix $H$ are identical. It is worth noting that the Hessian matrix $\mathcal{H}_k(\be)$ is a positive or a positive semi-definite matrix because $p_k$ reaches the local minimum at point $\be$ for which $p_k(\be)=0$. We can sum these findings in the following matrix inequality.
\begin{corollary}\label{cor:equality_conditions_for_QFIs}
\[
H_c\geq H,
\]
and $H_c= H$ if and only if for all  $k$ and $\be$ such that $p_k(\be)=0$, $\mathcal{H}_k(\be)=0$.
\end{corollary}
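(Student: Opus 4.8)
The plan is to deduce the corollary directly from Theorem~\ref{thm:theorem1}, which already supplies the decomposition $H_c = H + 2\sum_{p_k=0}\mathcal{H}_k$ with $\mathcal{H}_k(\be) = \partial_{ij}p_k(\be)$. Given this, the corollary reduces to two elementary linear-algebra facts: first, that each Hessian $\mathcal{H}_k(\be)$ occurring in the sum is positive semi-definite; and second, that a finite sum of positive semi-definite matrices is positive semi-definite, and equals the zero matrix precisely when every summand is the zero matrix. So the work is to verify the first fact and then assemble the pieces.

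First I would establish that $\mathcal{H}_k(\be) \geq 0$ whenever $p_k(\be) = 0$. Since the eigenvalues of a density matrix are non-negative, $p_k(\be') \geq 0$ for every $\be'$ in the parameter domain, while $p_k(\be) = 0$; hence $\be$ is a (global, and in particular local) minimum of the real-valued function $p_k$. As already noted in the text preceding Theorem~\ref{thm:theorem1}, at such a point the gradient vanishes, $\partial_i p_k(\be) = 0$, and the assumption $\R_\be \in C^{(2)}$ makes $p_k$ twice continuously differentiable there, so the second-order necessary condition for a local minimum of a $C^{2}$ function gives $\mathcal{H}_k(\be) = \partial_{ij}p_k(\be) \geq 0$. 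Summing over the finitely many indices $k$ with $p_k(\be) = 0$ yields $\sum_{p_k=0}\mathcal{H}_k(\be) \geq 0$, and therefore $H_c(\be) = H(\be) + 2\sum_{p_k=0}\mathcal{H}_k(\be) \geq H(\be)$. For the equality clause, observe that if a sum of positive semi-definite matrices vanishes then, testing against an arbitrary vector $v$, the quadratic form $v^{T}\!\big(\sum_k \mathcal{H}_k\big)v = \sum_k v^{T}\mathcal{H}_k v$ is a sum of non-negative numbers equal to zero, forcing $v^{T}\mathcal{H}_k v = 0$ for all $v$ and all $k$, i.e.\ $\mathcal{H}_k = 0$; the converse is trivial. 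Combined with Theorem~\ref{thm:theorem1}, $H_c(\be) = H(\be)$ holds if and only if $\sum_{p_k(\be)=0}\mathcal{H}_k(\be) = 0$, equivalently $\mathcal{H}_k(\be) = 0$ for every $k$ with $p_k(\be) = 0$; quantifying over all $\be$ gives the stated characterization.

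Since the substantive content is carried entirely by Theorem~\ref{thm:theorem1}, I do not expect a genuine obstacle here. The only point deserving a moment of care is the twice-differentiability of the eigenvalue branch $p_k$ at the degenerate point $\be$ where it vanishes, so that the Hessian $\mathcal{H}_k(\be)$ and the minimum argument make sense — but this is exactly the regularity that is established and used in the proof of Theorem~\ref{thm:theorem1} itself, so it may be taken for granted here. A further minor point is the implicit assumption that $\be$ is an interior point of the parameter domain, which is what licenses the second-order minimum condition; on the boundary of a restricted domain one would obtain positive semi-definiteness only along inward directions, but in the standard setting where $\be$ ranges over an open set this complication does not arise.
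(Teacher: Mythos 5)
Your proposal is correct and follows essentially the same route as the paper: the inequality comes from positive semi-definiteness of each Hessian $\mathcal{H}_k(\be)$ (since $p_k\geq 0$ attains a minimum where it vanishes, so the gradient is zero and the Hessian is $\geq 0$), and the equality clause follows from Theorem~\ref{thm:theorem1} because a sum of positive semi-definite matrices vanishes iff each summand does. Your write-up merely spells out details (the quadratic-form argument, the interior-point caveat) that the paper leaves implicit.
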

\begin{proof}
The inequality comes from the fact that $\mathcal{H}_k(\be)$ is a positive semi-definite or a positive definite matrix. The equality condition comes directly from Eq.~\eqref{eq:bures_metric}.
\end{proof}

% Th 2: The continuous quantum Fisher information matrix as  the continuous limit of the quantum Fisher information matrix (lim expression+the sense in which it is continuous explained)
Next we show that the quantity $H_c$ is in a certain sense a continuous version of the quantum Fisher information matrix. The discontinuous points of the quantum Fisher information matrix are redefined as the limits of the quantum Fisher information matrix of nearby points.
\begin{theorem}\label{thm:theorem2}
We denote a unit vector with number $1$ at the $l$'th position as $\boldsymbol{e}_l=(0,\dots,0,1,0\dots,0)$. Then
\[\label{eq:construction_of_continuous_QFI}
H_c^{ij}(\be)=\lim_{\de\rightarrow 0}H^{ij}(\be+\de\,\boldsymbol{e}_i)=\lim_{\de\rightarrow 0}H^{ij}(\be+\de\,\boldsymbol{e}_j).
\]
Moreover, $H_c^{ij}$ is a continuous function in parameter $\epsilon_i$, $\epsilon_j$ respectively, for any fixed parameters $\epsilon_k$ such that $k\neq i$, $k\neq j$ respectively.\footnote{In the proof of this continuity property we also assume that the number of eigenvalues is finite, which leads to $\sum_{p_k=0}\mathcal{O}(\de)=\mathcal{O}(\de)$. This assumption might be problematic, for example, when estimating parameters encoded in Gaussian quantum states because such states live in an infinite-dimensional Hilbert space. However, we believe that this assumption might not be necessary, and it should be possible to derive $\sum_{p_k=0}\mathcal{O}(\de)=\mathcal{O}(\de)$ by showing that the sum converges sufficiently fast. This would nevertheless come from very careful considerations and it could require expanding relevant quantities up to the third order in $\de$.}
\end{theorem}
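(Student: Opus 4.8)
The plan is to leverage Theorem~\ref{thm:theorem1}, which already expresses $H_c^{ij}(\be)=H^{ij}(\be)+2\sum_{p_k(\be)=0}\partial_{ij}p_k(\be)$. The task then reduces to showing that moving a tiny amount $\de$ in the direction $\be+\de\,\boldsymbol{e}_i$ converts each ``missing'' term (those with $p_k(\be)=0$, which are dropped from the sum in Eq.~\eqref{QFI_multi} because $p_k+p_l=0$) precisely into the Hessian contribution $2\partial_{ij}p_k(\be)$, while leaving all other terms unchanged in the limit. First I would fix $\be$ and partition the eigenvalue indices at $\be+\de\,\boldsymbol{e}_i$ into three groups: those that were strictly positive at $\be$ (these vary continuously and contribute $H^{ij}(\be)$ in the limit), those that were zero at $\be$ and remain zero in the direction $\boldsymbol{e}_i$ (still dropped), and those that were zero at $\be$ but become strictly positive for $\de\neq 0$. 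The third group is the crux.

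For an eigenvalue $p_k$ with $p_k(\be)=0$, since $\R_\be\in C^{(2)}$ and $p_k$ attains a local minimum at $\be$, we have $\partial_i p_k(\be)=0$, so a Taylor expansion in the single variable gives $p_k(\be+\de\,\boldsymbol{e}_i)=\tfrac{1}{2}\partial_{ii}p_k(\be)\,\de^2+o(\de^2)$. The key step is then to analyze the corresponding term in Eq.~\eqref{QFI_multi}, namely $2\,\Re(\bra{k}\partial_i\R|l\rangle\bra{l}\partial_j\R|k\rangle)/(p_k+p_l)$ with $l=k$ (the term that was previously absent). Its numerator is $2|\bra{k}\partial_i\R_{\be+\de\,\boldsymbol{e}_i}|k\rangle|^2=2(\partial_i p_k(\be+\de\,\boldsymbol{e}_i))^2$ by Hellmann--Feynman, and its denominator is $2p_k(\be+\de\,\boldsymbol{e}_i)$. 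Since $\partial_i p_k(\be+\de\,\boldsymbol{e}_i)=\partial_{ii}p_k(\be)\,\de+o(\de)$ and $p_k(\be+\de\,\boldsymbol{e}_i)=\tfrac12\partial_{ii}p_k(\be)\de^2+o(\de^2)$, the ratio tends to $2\partial_{ii}p_k(\be)$, which is indeed $2\partial_{ij}p_k(\be)$ when $i=j$. For the off-diagonal case $i\neq j$, I would instead show the $l=k$ term has numerator $2\,\Re(\bra{k}\partial_i\R|k\rangle\bra{k}\partial_j\R|k\rangle)=2\,\partial_i p_k\,\partial_j p_k$ evaluated at $\be+\de\,\boldsymbol{e}_i$; here $\partial_i p_k(\be+\de\,\boldsymbol{e}_i)=\partial_{ii}p_k(\be)\de+o(\de)$ while $\partial_j p_k(\be+\de\,\boldsymbol{e}_i)=\partial_{ij}p_k(\be)\de+o(\de)$ by continuity of the second derivative, so the ratio $2\partial_i p_k\,\partial_j p_k/(2p_k)$ tends to $2\partial_{ij}p_k(\be)$ as claimed. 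I would also verify that the cross terms $l\neq k$ involving one vanishing and one nonvanishing eigenvalue (which are newly included for $\de\neq0$, since now $p_k>0$) go to the same limit they already had by continuity, so they match the corresponding terms already present in $H^{ij}(\be)$; this requires noting that $\bra{k}\partial_i\R|l\rangle\to 0$ as $\de\to0$ for the formerly-vanishing $k$, so these cross terms contribute $0$ in the limit, consistent with Theorem~\ref{thm:theorem1}.

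For the continuity statement, I would argue that $H_c^{ij}(\be+\de\,\boldsymbol{e}_i)$ differs from $H_c^{ij}(\be)$ by $\mathcal{O}(\de)$. On the set where $\de\neq0$ and no further rank change occurs, $H_c=H$ (by Theorem~\ref{thm:theorem1}, as the only possibly-vanishing eigenvalues in that direction were the ones handled above, now strictly positive), and $H$ is built from the spectral data $p_k$, $\ket{k}$ and $\partial_i\R$, all of which are continuous (the eigenvectors can be chosen continuously away from level crossings, and the sum is finite by assumption), with denominators bounded away from zero—so $H$ varies continuously. The footnoted assumption that the number of eigenvalues is finite enters exactly here, to guarantee $\sum_{p_k=0}\mathcal{O}(\de)=\mathcal{O}(\de)$ rather than an uncontrolled sum. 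The main obstacle I anticipate is the bookkeeping when several eigenvalues vanish simultaneously at $\be$ with possible degeneracies or level crossings along the ray $\be+\de\,\boldsymbol{e}_i$: the eigenprojectors need not be individually smooth, so I would work with the total projector onto the formerly-null eigenspace and its image under $\partial_i\R$, using perturbation theory for the sum of small eigenvalues rather than tracking each $p_k$ separately, and only at the end reassemble $\sum_{p_k(\be)=0}\partial_{ij}p_k(\be)$ as $\partial_{ij}$ of the trace of the density matrix restricted to that eigenspace, which is basis-independent and hence well defined despite the degeneracy ambiguity.
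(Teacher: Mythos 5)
Your overall strategy matches the paper's: expand the eigenvalues and matrix elements to second order in $\de$, show that the newly appearing terms of Eq.~\eqref{QFI_multi} converge to $2\partial_{ij}p_k(\be)$, and invoke Theorem~\ref{thm:theorem1}. Your treatment of the diagonal pairs $l=k$ with $p_k(\be)=0$ is correct and is exactly the computation in the paper's Eqs.~\eqref{eq:elements_p_and_kl}--\eqref{eq:proof_of_thm:theorem2}. However, there are three gaps. First, and most importantly, you never address the off-diagonal pairs $(k,l)$ with $k\neq l$ and $p_k(\be)=p_l(\be)=0$: these are also absent from the sum at $\be$ and newly included once both eigenvalues become positive, so you must show they contribute nothing in the limit. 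The paper does this by proving $\bra{k_{\be+\bdeps}}\partial_i\R_{\be+\bdeps}\ket{l_{\be+\bdeps}}=\delta_{kl}\sum_m\partial_{im}p_k\,\de_m+\mathcal{O}(\bdeps^2)$, so that for $k\neq l$ the numerator is $\mathcal{O}(\de^4)$ against a denominator of order $\de^2$. Second, your handling of the mixed pairs (one vanishing, one non-vanishing eigenvalue) is internally inconsistent: these terms are \emph{not} newly included (they already appear at $\be$ because $p_k+p_l=p_l>0$), and the claim $\bra{k}\partial_i\R\ket{l}\to 0$ is false in general (e.g.\ $\bra{k}\partial_i\R\ket{l}=(p_l-p_k)\braket{k}{\partial_i l}$ for $k\neq l$). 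The correct statement is simply that these terms converge to their values at $\be$ by continuity, since their denominators are bounded away from zero; your erroneous justification happens not to damage the final bookkeeping, but as written it contradicts your own conclusion. Third, your three-way partition and leading-order ratio analysis silently assume $\partial_{ii}p_k(\be)>0$ for every eigenvalue that becomes positive; eigenvalues vanishing to higher order (e.g.\ $p_k\sim\de^4$) fall outside your cases, and the paper closes this hole with Sylvester's criterion, which gives $\partial_{ii}p_k(\be)=0\Rightarrow\partial_{ij}p_k(\be)=0$ so that the sum can be extended over all $k$ with $p_k(\be)=0$. Your continuity argument also needs this same criterion to dispose of eigenvalues that remain identically zero along the ray, rather than the unsubstantiated claim that $H_c=H$ off the point $\be$.
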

\begin{proof}
  See Appendix~\ref{app:theorem2}.
\end{proof}

The next corollary will show that the quantum Fisher information matrix is not in general a continuous function even for density matrices that are analytical functions of its parameters.
\begin{corollary}\label{cor:discontinuous_QFI}
If $\sum_{p_k(\be)=0}\partial_{ij}p_k(\be)\neq 0$ then the element of the quantum Fisher information matrix $H^{ij}$ is not continuous at point $\be$.
\end{corollary}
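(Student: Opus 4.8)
The plan is to obtain this as an immediate consequence of chaining Theorem~\ref{thm:theorem1} and Theorem~\ref{thm:theorem2}; no new computation is required. The hypothesis $\sum_{p_k(\be)=0}\partial_{ij}p_k(\be)\neq 0$ is precisely the defect term appearing in Eq.~\eqref{eq:bures_metric}, so Theorem~\ref{thm:theorem1} immediately yields $H_c^{ij}(\be)\neq H^{ij}(\be)$. Theorem~\ref{thm:theorem2} then identifies $H_c^{ij}(\be)$ with the limit $\lim_{\de\to 0}H^{ij}(\be+\de\,\boldsymbol{e}_i)$, i.e.\ with the limit of $H^{ij}$ along the coordinate line through $\be$ in the $i$-th direction. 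Combining the two, the limit of $H^{ij}$ along this line differs from its value at $\be$, which is exactly the failure of continuity.

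Concretely I would proceed as follows. First, invoke Theorem~\ref{thm:theorem1} to write $H_c^{ij}(\be)-H^{ij}(\be)=2\sum_{p_k(\be)=0}\partial_{ij}p_k(\be)$, and observe that by assumption the right-hand side is nonzero, so $H^{ij}(\be)\neq H_c^{ij}(\be)$. Second, apply Theorem~\ref{thm:theorem2}, which under the standing assumption $\R_\be\in C^{(2)}$ gives $H_c^{ij}(\be)=\lim_{\de\to 0}H^{ij}(\be+\de\,\boldsymbol{e}_i)$. Third, take the sequence $\be_n:=\be+\tfrac1n\,\boldsymbol{e}_i$, which converges to $\be$; along it $H^{ij}(\be_n)\to H_c^{ij}(\be)\neq H^{ij}(\be)$. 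Since continuity of $H^{ij}$ at $\be$ would force $H^{ij}(\be_n)\to H^{ij}(\be)$ for every sequence $\be_n\to\be$, the function $H^{ij}$ cannot be continuous at $\be$.

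There is no real obstacle here, since the substantive content has already been established in the two theorems. The only point worth spelling out is that continuity of a multivariate function at a point is a statement about all approach directions, whereas we only need to exhibit one bad direction; the coordinate direction $\boldsymbol{e}_i$ supplied by Theorem~\ref{thm:theorem2} does the job, provided $\be+\de\,\boldsymbol{e}_i$ lies in the parameter domain for all sufficiently small $\de$, which holds whenever $\be$ is an interior point (as is implicit in even speaking of continuity at $\be$). One could equally use the direction $\boldsymbol{e}_j$ by the symmetric half of Eq.~\eqref{eq:construction_of_continuous_QFI}.
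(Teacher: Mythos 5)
Your proposal is correct and follows essentially the same route as the paper: both combine Theorem~\ref{thm:theorem1} and Theorem~\ref{thm:theorem2} to show that $\lim_{\de\to 0}H^{ij}(\be+\de\,\boldsymbol{e}_i)-H^{ij}(\be)=2\sum_{p_k(\be)=0}\partial_{ij}p_k(\be)\neq 0$, contradicting continuity along the $\epsilon_i$ coordinate direction. The sequence formulation and the remark about interior points are just explicit restatements of what the paper leaves implicit.
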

\begin{proof}
Combining theorem~\ref{thm:theorem1} and theorem~\ref{thm:theorem2}, we find
\[
\lim_{\de\rightarrow 0}H^{ij}(\be+\de\,\boldsymbol{e}_i)-H^{ij}(\be)=2\!\!\!\!\sum_{p_k(\be)=0}\!\!\!\!\partial_{ij}p_k(\be)\neq 0,
\]
which by definition means that $H^{ij}$ is not continuous in $\epsilon_i$ at point $\be$ and thus neither is it a continuous function at point $\be$.
\end{proof}

%figure:extreme example
\begin{figure}[t!]
  \centering
\includegraphics[width=1\linewidth]{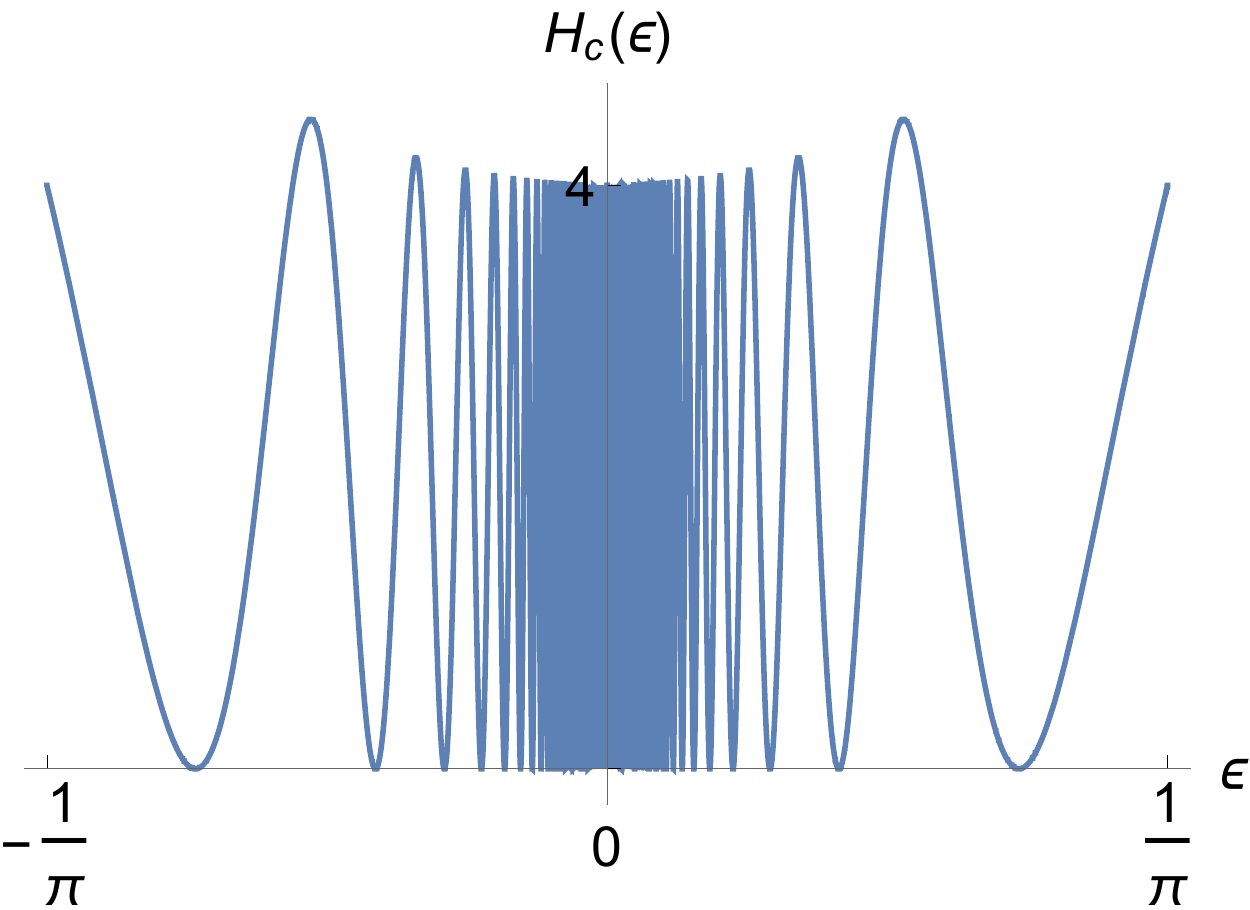}
  \caption{The continuous quantum Fisher information of density matrix
  $\R_{\epsilon}=\begin{cases}
      \epsilon^4\sin^2\!\!\!\; \frac{1}{\epsilon}\,\pro{0}{0}+(1-\epsilon^4\sin^2\!\!\!\; \frac{1}{\epsilon}\,)\pro{1}{1} & \epsilon\neq 0\\
      \pro{1}{1} & \epsilon=0.
   \end{cases}
$
  The second derivative $\partial_{\epsilon\epsilon}\R_{\epsilon}$ exists everywhere but it is discontinuous at point $\epsilon=0$. Theorem~\ref{thm:theorem2} does not apply anymore and $H_c$ does not have to be continuous. An explicit computation shows $H_c(\epsilon)=\begin{cases}
      \frac{4(2\epsilon\sin\!\!\!\; \frac{1}{\epsilon}-\cos\!\!\!\; \frac{1}{\epsilon})^2}{1-\epsilon^4\sin^2\!\!\!\; \frac{1}{\epsilon}} & \epsilon\neq 0\\
      0 & \epsilon=0.
   \end{cases}
$}\label{fig:extreme example}
\end{figure}

% Theorem: discontinuities (from proof) Impossibility of the Taylor expansion
Theorem~\ref{thm:theorem2} says when a single parameter $\epsilon$ is estimated, $H_c$ is a continuous function in this parameter. For that reason we call $H_c$ \emph{the continuous quantum Fisher information matrix}. It is important to point out that the assumption required in all of our theorems, $\R_{\be}\in C^{(2)}$, is crucial for theorem~\ref{thm:theorem2} to hold (see Fig.~\ref{fig:extreme example}).  Also, similarly to the quantum Fisher information matrix, the continuous quantum Fisher information matrix is not in general continuous in the topology of multiple parameters $\be=(\epsilon_1,...,\epsilon_n)$. This is precisely stated in the following theorem.
\begin{theorem}\label{thm:discontinuities}
%Let $H_c$ be a function of at least two parameters.
If there exists a unit vector $\boldsymbol{u}=(u_1,\dots,u_n)$ such that
\[\label{eq:Delta}
\Delta_{\boldsymbol{u}}^{ij}(\be)\!:=\!2\!\!\!\!\!\!\!\!\!\!\!\!\!\!\!\!\sum_{\substack{p_{k}= 0, \\\sum_{s,t}\partial_{st}p_{k}u_su_t> 0}}
\!\!\!\!\!\!\!\!\!\!\!\!\!\!\!\bigg(\!\frac{\big(\sum_{m}\partial_{im}p_k u_m\big) \big(\sum_{n}\partial_{jn}p_ku_n\big)}{\sum_{s,t}\partial_{st}p_{k}u_su_t}-\partial_{ij}p_k\!\bigg)\!\neq\! 0,
\]
then the element of the continuous quantum Fisher information matrix $H_c^{ij}$ is not continuous at point $\be$. We can also write Eq.~\eqref{eq:Delta} in an elegant matrix form, ${\Delta_{\boldsymbol{u}}^{ij}(\be):=2\!\sum_{p_{k}= 0,\boldsymbol{u}^T\mathcal{H}_k \boldsymbol{u}>0}\Big(\frac{(\mathcal{H}_k \boldsymbol{u})_i(\mathcal{H}_k \boldsymbol{u})_j}{\boldsymbol{u}^T\mathcal{H}_k \boldsymbol{u}}-\mathcal{H}_k^{ij}\Big)\neq 0}$. %Moreover, $H_c^{ij}$ does not have a Taylor expansion at point $\be$.

We denote a unit vector with number $1$ at the $l$'th position as $\boldsymbol{e}_l=(0,\dots,0,1,0\dots,0)$. $H_c^{ij}$ is continuous in $\epsilon_l$ at point $\be$ if and only if $\Delta_{\boldsymbol{e}_l}^{ij}(\be)=0$.

$\Delta_{\boldsymbol{u}}^{ij}$ measures the jump of function $H_c^{ij}$ at point $\be$ when coming from direction $\boldsymbol{u}$,
\[\label{eq:meaning_of_Delta}
\Delta_{\boldsymbol{u}}^{ij}(\be)=\lim_{\de\rightarrow 0}H_c^{ij}(\be+\de\,\bu)-H_c^{ij}(\be).
\]
\end{theorem}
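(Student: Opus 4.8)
The plan is to reduce the multi-parameter discontinuity to the single-parameter continuity statement of Theorem~\ref{thm:theorem2} by restricting the density matrix to the ray $\be+\de\,\bu$. Concretely, I would define $\tilde\R_\de := \R_{\be+\de\,\bu}$, a one-parameter family in the single variable $\de$, and observe that $\tilde\R_\de \in C^{(2)}$ in $\de$ because $\R_{\be}\in C^{(2)}$ and the restriction to a line is a smooth reparametrization. The eigenvalues of $\tilde\R_\de$ are $\tilde p_k(\de) = p_k(\be+\de\,\bu)$, so by the chain rule $\partial_\de \tilde p_k|_{\de=0} = \sum_m \partial_m p_k\, u_m$ and $\partial_{\de\de}\tilde p_k|_{\de=0} = \sum_{s,t}\partial_{st}p_k\, u_s u_t$ (the first-derivative term vanishes at $\de=0$ precisely because $p_k(\be)=0$ forces $\partial_m p_k(\be)=0$, the same local-minimum argument used throughout). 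Thus along the ray, the set $\{k : \tilde p_k(0)=0,\ \partial_{\de\de}\tilde p_k(0)>0\}$ is exactly the index set $\{p_k=0,\ \bu^T\mathcal{H}_k\bu>0\}$ appearing in Eq.~\eqref{eq:Delta}: these are the eigenvalues that vanish at $\be$ but become strictly positive for small $\de\neq 0$ along $\bu$.

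Next I would compute $\lim_{\de\to 0}H_c^{ij}(\be+\de\,\bu)$ using Theorem~\ref{thm:theorem1} applied at the point $\be+\de\,\bu$: $H_c^{ij}(\be+\de\,\bu) = H^{ij}(\be+\de\,\bu) + 2\sum_{p_k(\be+\de\,\bu)=0}\partial_{ij}p_k(\be+\de\,\bu)$. For generic small $\de$ along $\bu$, the eigenvalues in the ``popped out'' set are now strictly positive, so they no longer appear in the correction sum; the remaining vanishing eigenvalues are those with $\bu^T\mathcal{H}_k\bu=0$ (and, by positive semidefiniteness of $\mathcal{H}_k$, also $\mathcal{H}_k\bu=0$, hence $(\mathcal{H}_k\bu)_i=(\mathcal{H}_k\bu)_j=0$), which contribute the same $\partial_{ij}p_k$ on both sides and cancel. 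So the jump reduces to understanding $\lim_{\de\to 0}H^{ij}(\be+\de\,\bu)$ together with the disappearing Hessian terms. Here I would invoke Theorem~\ref{thm:theorem2} \emph{as applied to the one-parameter family} $\tilde\R_\de$: the continuous QFI of $\tilde\R$ at $\de=0$ equals $\lim_{\de\to 0}\tilde H(\de)$, where $\tilde H(\de)$ is the ordinary QFI of $\tilde\R_\de$ in the variable $\de$. The relation between $\tilde H$ and the original $H^{ij}$ restricted to the ray is $\tilde H(\de) = \sum_{i,j}H^{ij}(\be+\de\,\bu)u_i u_j$; more usefully, I would track the matrix-valued object directly by polarization, i.e. run the argument with the bilinear form $H^{ij}$ rather than the scalar $\tilde H$, so that the $u_i u_j$ weights become the $(\mathcal{H}_k\bu)_i(\mathcal{H}_k\bu)_j$ numerators. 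Combining, $\lim_{\de\to 0}H^{ij}(\be+\de\,\bu) = H^{ij}(\be) + 2\sum_{p_k=0,\ \bu^T\mathcal{H}_k\bu>0}\frac{(\mathcal{H}_k\bu)_i(\mathcal{H}_k\bu)_j}{\bu^T\mathcal{H}_k\bu}$, which is exactly the content of the one-parameter limit formula \eqref{eq:construction_of_continuous_QFI} rephrased for the ray.

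Putting the pieces together yields
\[
\lim_{\de\to 0}H_c^{ij}(\be+\de\,\bu)-H_c^{ij}(\be)
=2\!\!\!\!\!\sum_{\substack{p_k=0,\\ \bu^T\mathcal{H}_k\bu>0}}\!\!\!\!\!\Big(\tfrac{(\mathcal{H}_k\bu)_i(\mathcal{H}_k\bu)_j}{\bu^T\mathcal{H}_k\bu}-\mathcal{H}_k^{ij}\Big)=\Delta_{\bu}^{ij}(\be),
\]
which is Eq.~\eqref{eq:meaning_of_Delta}; the discontinuity claim and the ``if and only if'' characterization with $\bu=\boldsymbol{e}_l$ follow immediately since continuity of $H_c^{ij}$ in $\epsilon_l$ is by definition the vanishing of this directional limit-difference. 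The main obstacle I anticipate is the bookkeeping in the limit $\de\to 0$: one must argue that eigenvalues which are positive at $\be$ stay bounded away from zero for small $\de$ (continuity of the spectrum plus finiteness of the number of eigenvalues), that eigenvalues with $p_k(\be)=0$ and $\bu^T\mathcal{H}_k\bu=0$ truly have $\partial_\de\tilde p_k=0$ to the order needed so their contributions cancel rather than blow up, and that the error terms $\sum_{p_k=0}\mathcal{O}(\de)$ are genuinely $\mathcal{O}(\de)$ — the same finiteness caveat flagged in the footnote to Theorem~\ref{thm:theorem2}. Handling the degenerate-eigenvalue case (where the spectral projectors are not smooth) requires the usual care: work with spectral projectors onto groups of eigenvalues that are analytic in $\de$, or appeal to the $C^{(2)}$ hypothesis to justify the second-order Taylor expansions of $\tilde p_k(\de)$ used above.
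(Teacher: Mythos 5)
Your proposal is correct and follows essentially the same route as the paper: the workhorse in both is the directional second-order expansion of $H^{ij}(\be+\de\,\bu)$ (the paper's Eq.~\eqref{eq:H_expanded2}, already derived for general $\bu$ in the proof of theorem~\ref{thm:theorem2}), combined with theorem~\ref{thm:theorem1} applied at both $\be$ and $\be+\de\,\bu$ and the same bookkeeping of which vanishing-eigenvalue Hessian terms survive, cancel, or disappear, after which the discontinuity claims follow from the nonzero directional jump exactly as you say. Your initial idea of literally invoking theorem~\ref{thm:theorem2} on the restricted one-parameter family only controls the quadratic form $\sum_{i,j}u_iu_jH^{ij}$ rather than the individual matrix elements, but you correctly recognize this and fall back on rerunning the expansion for the bilinear object, which is precisely what the paper does.
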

\begin{proof}
  See Appendix~\ref{app:discontinuities}.
\end{proof}

%Example with multiple parameters
\begin{figure}[t!]
  \centering
\includegraphics[width=0.75\linewidth]{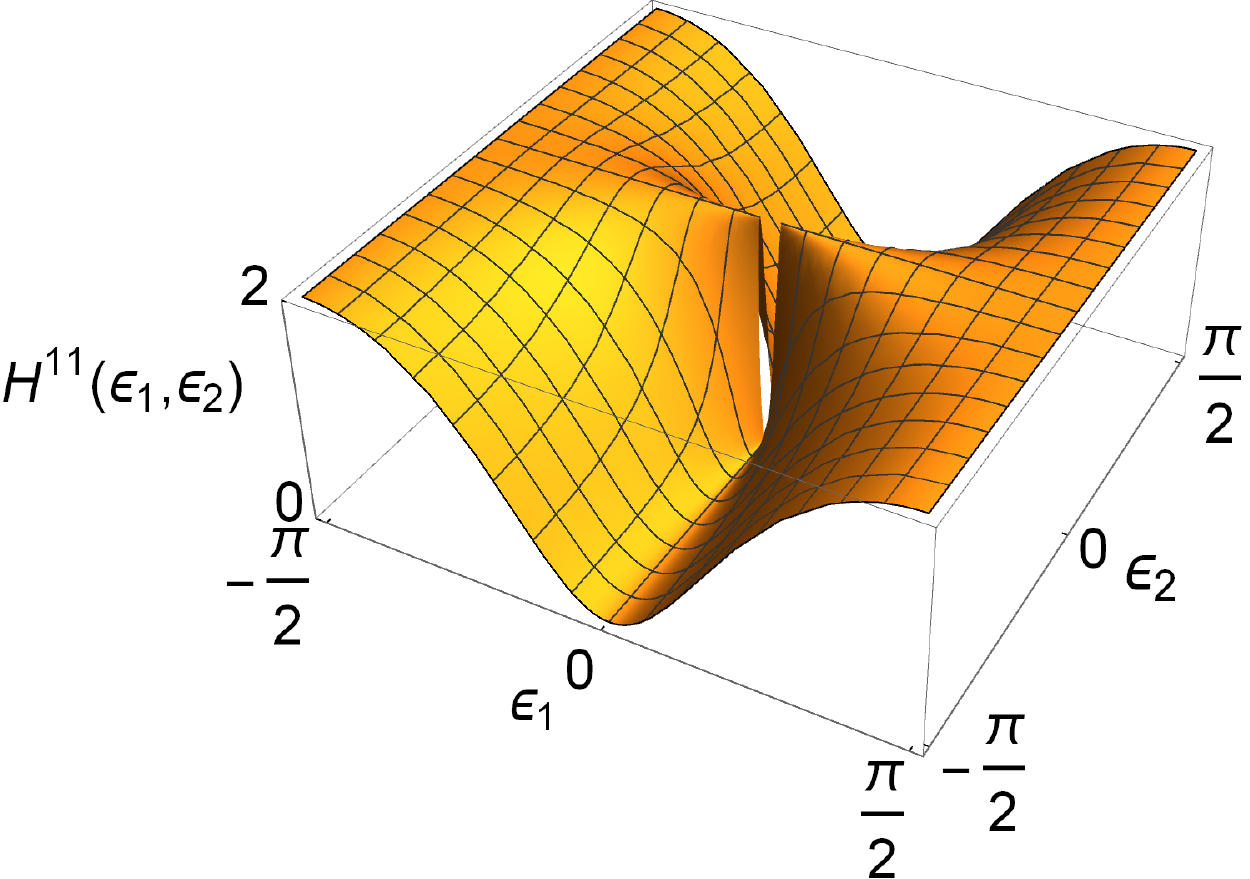}\\
  \includegraphics[width=0.75\linewidth]{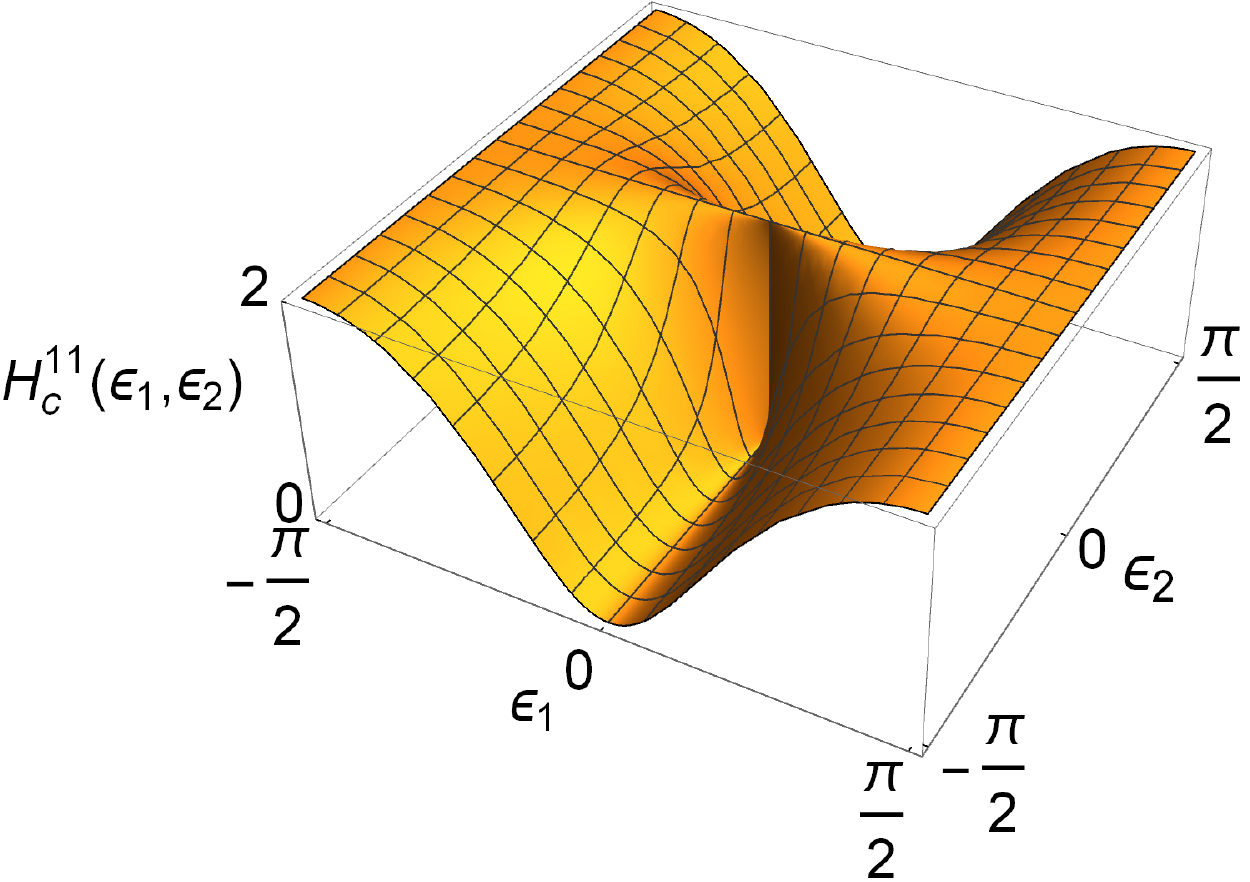}
  \caption{Graphs of the first element of the quantum Fisher information matrix $H^{11}$ and the first element of the continuous quantum Fisher information matrix $H_c^{11}$, for the estimation of parameters of the density matrix~\eqref{eq:discontinuous_example}. These graphs are identical everywhere apart from point $(\epsilon_1,\epsilon_2)=(0,0)$. Clearly, neither function is a continuous function in both parameters at the same time, however, $H_c^{11}$ is guaranteed to be a continuous function in $\epsilon_1$ for any $\epsilon_2$.}\label{fig:Continuous_QFI}
\end{figure}

We illustrate the discontinuous behavior of the quantum Fisher information matrix on the following example.
\begin{example}
Consider a quantum state depending on two parameters,
\[\label{eq:discontinuous_example}
\R_{\be}=\frac{1}{2}(\sin^2\!\!\!\; \epsilon_1+\sin^2\!\!\!\;  \epsilon_2)\pro{0}{0}+\frac{1}{2}\cos^2 \!\!\!\; \epsilon_1\pro{1}{1}+\frac{1}{2}\cos^2\!\!\!\;  \epsilon_2\pro{2}{2}.
\]
We are going to study the first element of the quantum Fisher information matrix $H^{11}$ which measures the mean squared error in the estimating parameter $\epsilon_1$. While the expression for the quantum Fisher information matrix~\eqref{QFI_multi} assigns value $H^{11}(0,0)=0$ to the problematic point $\be=(0,0)$, the continuous quantum Fisher information matrix assigns value $H_c^{11}(0,0)=2$. According to theorem~\ref{thm:theorem2} this definition of the problematic point makes the function $H_c^{11}$ a continuous function in $\epsilon_1$, but not necessarily in $\epsilon_2$.

Using theorem~\ref{thm:discontinuities}, we are going to prove that $H_c$ is not continuous in $\epsilon_2$  at point $\be=(0,0)$. To do that we will study $\Delta_\bu^{11}(0,0)$ from Eq.~\eqref{eq:Delta}. The only relevant eigenvalue is the first one because $p_1(0,0)=0$, while for others $p_2(0,0)=p_3(0,0)=\frac{1}{2}$. The respective Hessian matrix is
\[
\mathcal{H}_1(0,0)=
\begin{bmatrix}
  1 & 0 \\
  0 & 1
\end{bmatrix}
\]
that gives
\[
\Delta_{\boldsymbol{u}}^{11}(0,0)=2\bigg(\frac{u_1^2}{u_1^2+u_2^2}-1\bigg).
\]
Setting $\bu=\boldsymbol{e}_2$ we have $\Delta_{\boldsymbol{e}_2}^{11}(0,0)=-2$, which according to theorem~\ref{thm:discontinuities} means that $H_c^{11}$ is not continuous in $\epsilon_2$ at point $(0,0)$ and thus neither is it a continuous function of $\be$. Graphs of $H^{11}$ and $H_c^{11}$ are shown in Fig.~\ref{fig:Continuous_QFI}.
\end{example}

Theorem~\ref{thm:discontinuities} also states that it is not generally possible to use the multi-parameter Taylor's expansion of the quantum Fisher information matrix at problematic points $\be$ at which $p_k(\be)=0$ and $\mathcal{H}_k(\be)>0$. These are often exactly the points around which we would like to do this expansion, for example when considering a state with a slight impurity that is almost pure. For example, consider a task of estimating both phase and temperature of a quantum state, $\be=(\theta,T)$. Theorem~\ref{thm:discontinuities} says it is not possible in general to approximate the quantum Fisher information matrix by expanding this function in both the phase parameter $\theta$ and the small temperature parameter $T$ at the same time. The directional Taylor's expansion, for example, in parameter $\frac{\theta}{T}$, has to be employed instead. An example that utilizes a variation of this method in a quantum field theory in curved space-time can be found in Ref.~\cite{Safranek2015a}. We will return to this in a moment.

%Regularization procedure?
We mentioned in the Introduction that it is often necessary to use different expressions when calculating the quantum Fisher information matrix for density matrices of different ranks. However, we will design a method which will require only one expression to obtain every other expression by performing a certain limit. We are going to show that the quantum Fisher information matrix of any state can be calculated as a limit of the quantum Fisher information matrix of a full-rank state. We call this process \emph{the regularization procedure} in analogy with the result for Gaussian states~\cite{Safranek2015b}. %This can be used as an effective way of programming such expressions.
\begin{theorem}\label{thm:regularization_procedure}
We define a density matrix
\[\label{eq:reg_example}
\R_{\be,\nu}:=(1-\nu)\R_{\be}+\nu\R_{0},
\]
where $0<\nu<1$ is a real parameter and $\R_{0}$ is any $\be$-independent full-rank density matrix that is diagonal in the eigenbasis of the density matrix $\R_{\be}$. Then the resulting matrix $\R_{\be,\nu}$ is a full-rank matrix and
\begin{align}\label{eq:regularization_procedure_density_matrix}
H(\be)&=\lim_{\nu\rightarrow 0}H(\R_{\be,\nu}),\\
H_c(\be)&=\lim_{\nu\rightarrow 0}H(\R_{\be,\nu})+2\!\!\!\!\!\sum_{p_{k}(\be)= 0}\!\!\!\!\!\mathcal{H}_k(\be).
\end{align}
In finite-dimensional Hilbert spaces $\R_{0}$ can be defined as a multiple of identity, $\R_0=\frac{1}{\mathrm{dim}{\HS}}\hat{I}$.
\end{theorem}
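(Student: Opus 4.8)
The plan is to exploit the fact that $\R_{\be,\nu}$ is diagonal in the eigenbasis of $\R_{\be}$ by construction, so that the spectral data of the regularized matrix is completely explicit. Writing $\R_{\be}=\sum_k p_k\pro{k}{k}$ and $\R_0=\sum_k r_k\pro{k}{k}$ with all $r_k>0$, we get $\R_{\be,\nu}=\sum_k\big((1-\nu)p_k+\nu r_k\big)\pro{k}{k}$, i.e.\ the eigenvectors $\ket{k}$ are $\nu$-independent and the eigenvalues are $p_k^{(\nu)}:=(1-\nu)p_k+\nu r_k$. Since every $p_k^{(\nu)}>0$, the matrix is full rank and every pair $(k,l)$ contributes to the sum in Eq.~\eqref{QFI_multi}. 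The first step is therefore to write out $H^{ij}(\R_{\be,\nu})$ explicitly: because $\ket{k}$ does not depend on $\nu$ but may depend on $\be$, one has $\partial_i\R_{\be,\nu}=(1-\nu)\partial_i\R_{\be}$, so $\bra{k}\partial_i\R_{\be,\nu}\ket{l}=(1-\nu)\bra{k}\partial_i\R_{\be}\ket{l}$, and hence
\[
H^{ij}(\R_{\be,\nu})=2(1-\nu)^2\!\!\sum_{k,l}\frac{\Re\big(\bra{k}\partial_i\R_{\be}\ket{l}\bra{l}\partial_j\R_{\be}\ket{k}\big)}{(1-\nu)(p_k+p_l)+\nu(r_k+r_l)}.
\]

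The second step is to take the limit $\nu\to0$ term by term. For any pair with $p_k+p_l>0$ the denominator tends to $(p_k+p_l)>0$ and the prefactor $(1-\nu)^2\to1$, so such a term converges to exactly the corresponding term of $H^{ij}(\be)$ in Eq.~\eqref{QFI_multi}. For a pair with $p_k=p_l=0$ the denominator is $(1-\nu)\cdot0+\nu(r_k+r_l)=\nu(r_k+r_l)$, so the term behaves like $2(1-\nu)^2\Re(\cdots)/(\nu(r_k+r_l))$; the crucial point is that the numerator $\bra{k}\partial_i\R_{\be}\ket{l}$ \emph{vanishes} whenever $p_k(\be)=p_l(\be)=0$ — this is precisely the identity invoked in the footnote after Eq.~\eqref{def:Information_matrix} (for $\be$ with $p_k(\be)=0$ the eigenvalue attains a local minimum, so $\partial_i p_k(\be)=0$, and more generally the matrix element of $\partial_i\R_\be$ between two zero-eigenvalue eigenvectors is zero). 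Hence these potentially singular terms are identically zero for all $\nu$, and drop out of the limit. Collecting the surviving terms gives $\lim_{\nu\to0}H(\R_{\be,\nu})=H(\be)$, which is the first claimed identity; the second then follows immediately by adding $2\sum_{p_k(\be)=0}\mathcal{H}_k(\be)$ to both sides and invoking Theorem~\ref{thm:theorem1}. The final sentence about $\R_0=\tfrac{1}{\dim\HS}\hat I$ is trivial, since the identity is diagonal in every basis and is full rank in finite dimension.

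The main obstacle — and the one place where a little care is genuinely needed — is justifying the term-by-term passage to the limit. In finite dimension the sum is finite and this is immediate. In infinite dimension one must argue that the convergence of the series is uniform in $\nu$ near $0$, so that the limit commutes with the sum; this is the same finiteness/convergence caveat flagged in the footnote to Theorem~\ref{thm:theorem2}, and I would either restrict the clean statement to finite-dimensional $\HS$ or add a dominated-convergence-type hypothesis (e.g.\ that $\sum_{k,l}|\bra{k}\partial_i\R_\be\ket{l}\bra{l}\partial_j\R_\be\ket{k}|/(p_k+p_l)<\infty$, which already holds whenever $H^{ij}(\be)$ is well defined, together with a uniform lower bound forcing the $\nu$-dependent denominators to stay bounded below). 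A secondary, purely bookkeeping point is to confirm that no pair with exactly one of $p_k,p_l$ zero causes trouble: there the denominator tends to $p_k+p_l>0$, so nothing singular happens and the term simply lands in $H^{ij}(\be)$ as expected — consistent with the fact that $H$ itself (unlike $H_c$) does not see such pairs specially once $\be$ is fixed.
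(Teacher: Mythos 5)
Your proposal is correct and follows essentially the same route as the paper's proof: full-rank eigenvalues $(1-\nu)p_k+\nu r_k$, the substitution $\partial_i\R_{\be,\nu}=(1-\nu)\partial_i\R_{\be}$, a term-by-term limit $\nu\to 0$ in Eq.~\eqref{QFI_multi}, and an appeal to Theorem~\ref{thm:theorem1} for the second identity. You are in fact slightly more careful than the paper on the only delicate point --- you note explicitly that the $p_k=p_l=0$ terms have identically vanishing numerators (via the footnote identity) rather than merely asserting their limits are zero, and you flag the interchange of limit and sum in infinite dimension --- both welcome refinements.
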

\begin{proof}
$\R_{\be,\nu}$ has eigenvalues equal to $(1-\nu)p_k+\nu p_{0k}>0$, where $p_{0k}>0$ are eigenvalues of $\R_{0}$. Therefore the density matrix is full-rank and the sum in Eq.~\eqref{QFI_multi} for evaluating $H(\R_{\be,\nu})$ goes over all terms.  We evaluate this sum while inserting $\partial_i\R_{\be,\nu}=(1-\nu)\partial_i\R_{\be}$ and perform the limit $\nu\rightarrow 0$. Limits of terms for which $p_k+p_l=0$ are zero and limits of terms for which $p_k+p_l>0$ are identical to terms in the sum for $H(\R_{\be})\equiv H(\be)$. Therefore, performing such a limit gives exactly the quantum Fisher information matrix $H(\epsilon)$. The rest of the statement follows directly from theorem~\ref{thm:theorem1}.
\end{proof}

Theorem~\ref{thm:regularization_procedure} shows that the quantum Fisher information matrix of a pure state can be calculated as a limit of the quantum Fisher information matrix of mixed states, while obtaining all of its possible discontinuities. These discontinuities reflect the results of corollary~\ref{cor:discontinuous_QFI} and results of theorem~\ref{thm:discontinuities}, with $\nu$ acting as an additional parameter with the only difference that now $\nu$ is not the parameter we estimate. It is, for example, possible to use Eq.~\eqref{eq:meaning_of_Delta} to study directional limits in the extended unit vector $\tilde{\boldsymbol{u}}=(u_1,\dots,u_n,u_{\nu})$, where $u_{\nu}$ denotes the amount of direction in the mixedness parameter $\nu$. Such expressions can serve as a good substitute for a possibly non-existent multi-parameter Taylor's expansion: defining $\tilde{\be}=\tilde{\be}_0+\de\,\tilde{\boldsymbol{u}}$, where $\tilde{\be}_0=(\epsilon_1^{(0)},\dots,\epsilon_n^{(0)},\nu^{(0)})$ is the point around which we expand, according to Eq.~\eqref{eq:meaning_of_Delta} the quantum Fisher information matrix can be approximated to the zeroth order by
\[\label{eq:approx_H}
H_c^{ij}(\tilde{\be})\approx H_c^{ij}(\tilde{\be}_0)+\Delta_{\tilde{\boldsymbol{u}}}^{ij}(\be_0),
\]
where $i,j=1,\dots,n$. Of course, other parameters can be used instead of $\nu$, such as the previously mentioned temperature parameter $T$ using the extended unit vector $\tilde{\boldsymbol{u}}=(u_1,\dots,u_n,u_{T})$. The above equation is then what we could call the zeroth order of the directional Taylor's expansion. In case the function $H_c$ is discontinuous at point $\be_0$, the value of $\Delta_{\tilde{\boldsymbol{u}}}^{ij}(\be_0)$ is non-zero and this value measures the amount of jump in function $H_c$ in the direction $\bu$ from the point $\be_0$.

Both the regularization procedure and the directional Taylor's expansion can be demonstrated on the following example. We use the state from the first example, Eq.~\eqref{eq:discontinuous_example0}, and study the quantum Fisher information of the respective regularized state $\R_{\epsilon,\nu}$.

\begin{figure}[t!]
  \centering
\includegraphics[width=0.95\linewidth]{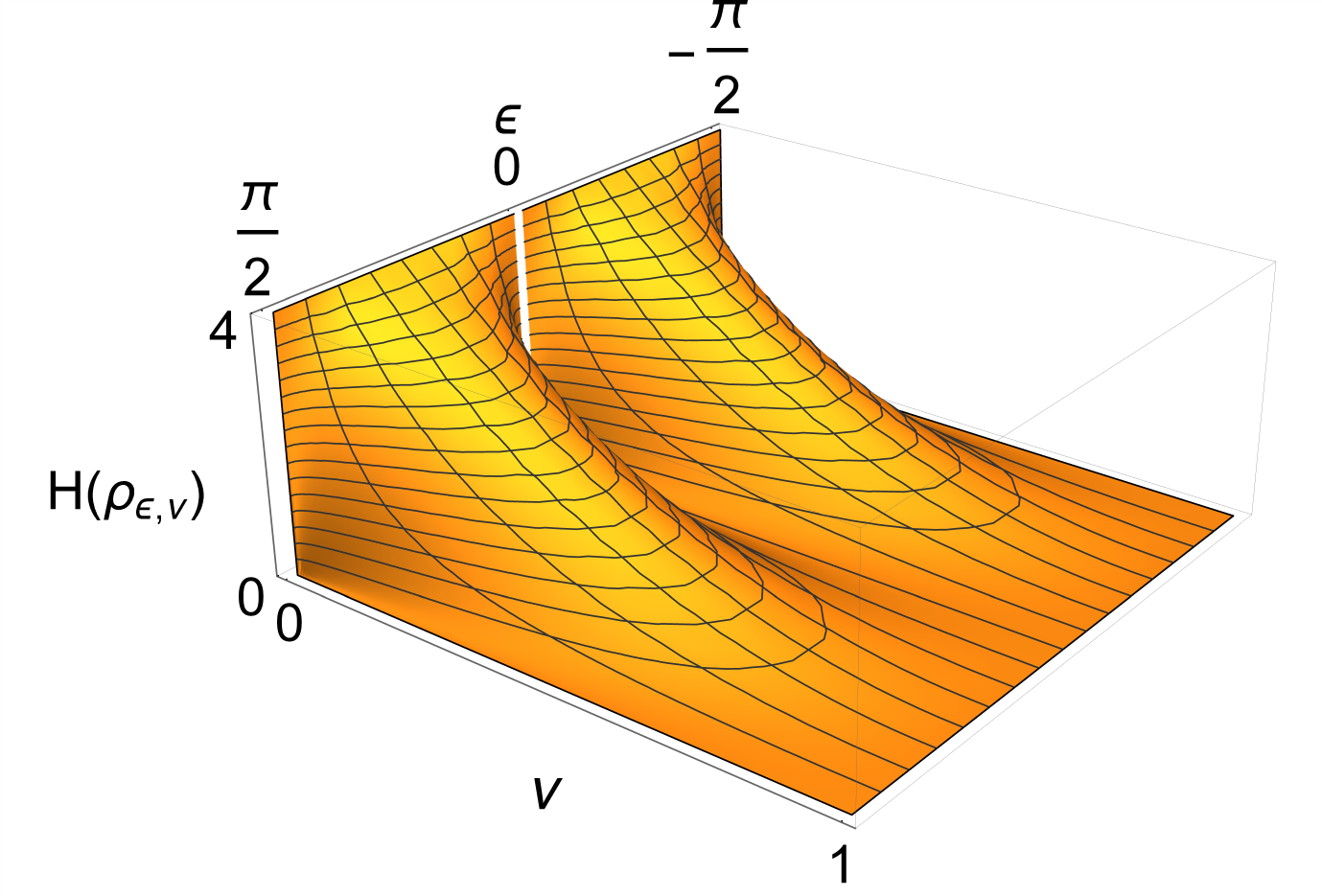}
  \caption{The quantum Fisher information of the regularized state, Eq.~\eqref{eq:discontinuous_example_mixed_H}. Clearly, as $\nu$ approaches zero the function approaches the (discontinuous) quantum Fisher information~\eqref{eq:disc_H_first_ex}, as shown in Fig.~\ref{fig:discrepancy}.}\label{fig:reg}
\end{figure}

%Example 3, phase estimation with a white noise
\begin{example}
Consider a density matrix dependent on parameter $\epsilon$ and the mixedness parameter $\nu$,
\[\label{eq:discontinuous_example_mixed}
\R_{\epsilon,\nu}=(1-\nu)\big(\sin^2\!\!\!\; \epsilon\,\pro{0}{0}+\cos^2\!\!\!\; \epsilon\,\pro{1}{1}\big)+\frac{\nu}{2} \hat{I}.
\]
The quantum Fisher information for estimating $\epsilon$ from this state is
\[\label{eq:discontinuous_example_mixed_H}
H(\R_{\epsilon,\nu})\equiv H^{11}(\epsilon,\nu)=\frac{4(1-\nu)^2\sin^2(2\epsilon)}{1-(1-\nu)^2\cos^2(2\epsilon)}.
\]
(The notation was taken from Theorem~\ref{thm:regularization_procedure}, where $\epsilon$ is the only estimated parameter while $\nu$ is auxiliary at this point.) This function is depicted in Fig.~\ref{fig:reg}. It is undefined at points $(\epsilon,\nu)=(k\frac{\pi}{2},0)$, $k\in \mathbb{Z}$; however, as theorem~\ref{thm:regularization_procedure} shows, for $\nu\rightarrow 0$ this function must converge to the quantum Fisher information given by Eq.~\eqref{eq:disc_H_first_ex}.

Now we illustrate how the approximation~\eqref{eq:approx_H} performs compared to the exact value \eqref{eq:discontinuous_example_mixed_H}. We will derive the quantum Fisher information at point $(\epsilon,\nu)=(0.1,0.04)$. Inserting this point into Eq.~\eqref{eq:discontinuous_example_mixed_H} yields the exact value of the quantum Fisher information for the estimation of $\epsilon$,
\[\label{eq:exact_value}
H(\epsilon)\equiv H^{11}(0.1,0.04)=1.27.
\]

The extended unit vector for the approximation is defined by $\tilde{\be}\equiv (\epsilon,\nu)=(0.1,0.04)={(0,0)+\de\,\tilde{\boldsymbol{u}}}$, which gives $\tilde{\boldsymbol{u}}=\frac{(0.1,0.04)-(0,0)}{\norm{(0.1,0.04)-(0,0)}}$.
All eigenvalues of $\R_{\epsilon,\nu}$ are non-zero at $(\epsilon,\nu)=(0.1,0.04)$, which according to corollary~\ref{cor:equality_conditions_for_QFIs} means that the quantum Fisher information and the continuous quantum Fisher information are identical at this point. This in combination with Eq.~\eqref{eq:approx_H} yields
\[
H(\epsilon)=H_c(\epsilon)=H_c^{11}(\tilde{\be})\approx H_c^{11}(0,0)+\Delta_{\tilde{\boldsymbol{u}}}^{11}(0,0).
\]
Clearly $H_c^{11}(0,0)=4$. The only relevant eigenvalue needed for calculating $\Delta_{\tilde{\boldsymbol{u}}}^{11}(0,0)$ is $p_1(\epsilon,\nu)=(1-\nu)\sin^2\epsilon+\frac{\nu}{2}$, for which $p_1(0,0)=0$. The respective Hessian matrix is
\[
\mathcal{H}_1(0,0)=
\begin{bmatrix}
  2 & 0 \\
  0 & 0
\end{bmatrix},
\]
from which we calculate $\Delta_{\tilde{\boldsymbol{u}}}^{11}(0,0)=0$. This results in $H(\epsilon)\approx 4$ which is obviously not a good approximation.

To fix this problem we employ a simple trick: we substitute $\nu_1:=\sqrt{\nu}$ which reparametrizes Eq.~\eqref{eq:discontinuous_example_mixed_H} and effectively changes its graph, and then we perform the identical procedure at point ${(\epsilon,\nu_1)=(\epsilon,\sqrt{\nu})=(0.1,0.2)}$. The relevant Hessian matrix is now
\[
\mathcal{H}_1(0,0)=
\begin{bmatrix}
  2 & 0 \\
  0 & 1
\end{bmatrix},
\]
resulting in $\Delta_{\tilde{\boldsymbol{u}}}^{11}(0,0)=-\frac{8}{3}$. Finally, we have
\[
H(\epsilon)\approx H_c^{11}(0,0)+\Delta_{\tilde{\boldsymbol{u}}}^{11}(0,0)=4-\frac{8}{3}=1.33,
\]
which approximates the exact value very well.
\end{example}

\section{Discussion and conclusion}

We presented a theory that describes discontinuities of the quantum Fisher information matrix and we linked these discontinuities to the discrepancy between two figures of merit in quantum metrology, the quantum Fisher information matrix and the (four times) Bures metric. Although we have shown that the Bures metric represents in some sense a continuous version of the quantum Fisher information matrix, both the quantum Fisher information matrix and the Bures metric can be discontinuous in the topology of multiple parameters. These discontinuities and discrepancies appear at sets of measure zero and therefore can be often ignored. They also appear only when varying estimation parameters changes the rank of the density matrix describing a quantum state. Specifically, these problems never appear in the estimation of unitary channels using quantum probe states.

However, in certain scenarios these problems show up and can be a source of great confusion, as it it is not usual to see discontinuous functions in physics. These scenarios involve common tasks such as estimation of decoherence parameters, space-time parameters, temperature, or simultaneous estimation of multiple parameters. Moreover, it is often assumed that common tools such as the Taylor's expansion can be always employed, especially when assumptions are difficult to check, for instance when using a perturbative quantum field theory. But the Taylor's expansion of the quantum Fisher information cannot exist at points of discontinuity, and such points can appear even when the density matrix is an analytical function of its parameters.

It is not very clear how to interpret these discontinuities from a physical point of view. Expressions for the quantum Fisher information show that when the rank of the density matrix changes, there is a sudden drop in the precision with what we can estimate the parameter of interest. This can be connected to the fact that in such scenarios $\epsilon$ is not identifiable, i.e., $\epsilon$ and $-\epsilon$ produce exactly the same statistics of measurement results and therefore cannot be distinguished (remember example~\ref{ex:example1} in which $\R_{\epsilon}=\R_{-\epsilon}$). However, it is possible to design scenarios where $\epsilon$ can be identified by different means, for example by simultaneously observing the change in phase, but the drop in the quantum Fisher information still does not disappear. It is clear, however, that the sudden drop is always connected to the information that can be extracted from the change of purity, since it always depends on the derivatives of eigenvalues of the density matrix. This drop might also be a demonstration of a quantum phase transition which occurs at absolute zero. When temperature of a quantum probe goes to zero, the thermal state describing this probe suddenly becomes a pure state, resulting in a change of the rank of the density matrix, and consequently resulting in a discontinuity of the quantum Fisher information for the estimation of parameter of interest. A similar type of behavior has been reported in several papers~\cite{zanardi2007bures,gu2010fidelity,banchi2014quantum,wu2016geometric}. The physical meaning of these discontinuities, also in scenarios of bosonic systems, is discussed in more detail in Ref.~\cite{vsafranek2016gaussian}.

The quantum Cram\'er-Rao bound holds for the possibly discontinuous quantum Fisher information matrix. The last open question is whether such a bound can be derived for the Bures metric and under what circumstances is this possible. We leave answering these questions for future work.

\textit{\textbf{Acknowledgement}} I thank M\u{a}d\u{a}lin Gu\c{t}\u{a} and Animesh Datta for suggesting corrections to my thesis and valuable input. I thank Joshua M. Deutsch for listening carefully to my proof, Joseph C. Schindler for our discussion about induced metric, and Michael E2 for sharing his PlotPiecewise Wolfram Mathematica code.

\appendix

\section{Infinitesimal Bures distance for singular density matrices}\label{app}

In this appendix we are going to show that the logic used in paper~\cite{sommers2003bures} to derive the infinitesimal Bures distance $d_B^2(\R,\R+\dR)$ leads to a different result than the one stated in that paper. At the same time we argue that other papers based on this result are not usually affected in large part. This is because an expression similar to the one published in~\cite{sommers2003bures} can be obtained by considering $d_B^2(\R_{\be},\R_{\be+\bdeps})$ from Eq.~\eqref{eqn:bures} instead of $d_B^2(\R,\R+\dR)$. In other words, the right (expected) result comes from a different starting point. Moreover, when $\R$ is non-singular these two approaches lead to the same results.

To derive the correct expression for $d_B^2(\R,\R+\dR)$, we are going to redo the proof from~\cite{sommers2003bures} in detail using the same argumentation. By the same argumentation we mean that we will inherently assume that $\dR$ is the first-order correction to $\R$. This assumption also means that the starting point $d_B^2(\R,\R+\dR)$ cannot be used for the derivation of the Bures metric $g^{ij}$, because the proof we are going to show will be applicable only for $\dR$ linearly proportional to the estimation parameters, $\dR\sim\bdeps$, and \emph{will not} be applicable in case of  $\dR\sim\bdeps+\bdeps^2$. We will also use a notation similar to~\cite{sommers2003bures} so the reader can easily see the differences.

We are going to compute the Bures distance $d_B^2(\R_A,\R_B)$ between two infinitesimally close density matrices of size $N$. Let
us set $\R_A=\R$, $\R_B=\R+\dR$:
\[\label{eq:O_sommers}
\sqrt{\R_A^{1/2}\R_B\R_A^{1/2}}=\R+\hat{X}+\hat{Y},
\]
where the operator $\hat{X}$ is of order 1 in $\dR$, while $\hat{Y}$ is of order 2. Squaring this equation we obtain the first and the second
order
\[\label{eq:X_Y_0}
\R^{1/2}\dR\R^{1/2}=\hat{X}\R +\R\hat{X} ,\quad -\hat{X}^2 = \hat{Y}\R + \R\hat{Y}
\]
or in the basis $\{\ket{k}\}$ in which $\R$ is diagonal with eigenvalues $p_k$, $\R=\sum_kp_k\pro{k}{k}$,
for $p_k+p_l>0$ we obtain
\[\label{eq:X_Y_1}
\bra{k}\hat{X}\ket{l}=\bra{k}\dR\ket{l}\frac{p_k^{1/2}p_l^{1/2}}{p_k+p_l},\quad \bra{k}\hat{Y}\ket{l}=-\bra{k}\hat{X}^2\ket{l}\frac{1}{p_k+p_l}.
\]
We point out that the above equation also gives $\bra{k}\hat{X}\ket{l}=0$ and $\bra{k}\hat{Y}\ket{l}=0$ for either $p_k>0,p_l=0$ or $p_k=0,p_l>0$. (\emph{Now we start to differ from the proof in~\cite{sommers2003bures}.}) Because the left-hand side of Eq.~\eqref{eq:O_sommers} belongs to the subspace $\mathcal{L}(\HS_{>0})$, where $\HS_{>0}$ is the Hilbert space spanned by the eigenvectors associated with nonzero eigenvalue $p_k$, also the right hand side belongs to the same subspace. Hence for $p_kp_l=0$ we have
\[\label{eq:X_Y_2}
\bra{k}\hat{X}\ket{l}=0,\quad \bra{k}\hat{Y}\ket{l}=0.
\]
Combining Eqs.~\eqref{eq:X_Y_1} and \eqref{eq:X_Y_2} we obtain
\[\label{eq:X}
\hat{X}=\sum_{p_k>0,p_l> 0}\bra{k}\dR\ket{l}\frac{p_k^{1/2}p_l^{1/2}}{p_k+p_l}\pro{k}{l}.
\]
Applying the trace on this operator we find
\[\label{eq:help_traceX1}
\tr \hat{X}=\frac{1}{2}\tr \big[\hat{P}_{\HS_{>0}}\dR\big]=\frac{1}{2}\tr \big[(\hat{I}-{\hat{P}}_0)\dR\big]=-\frac{1}{2}\tr \big[\hat{P}_0\dR\big],
\]
where $\hat{P}_{\HS_{>0}}=\hat{I}-{\hat{P}}_0$ is the projector onto the previously mentioned subspace $\HS_{>0}$ and ${\hat{P}}_0$ is the projector onto the kernel of the density matrix $\R$. The last inequality is due to the fact that $\tr{\R} = 1$, and hence $\tr[\dR] = 0$.
Using the spectral decomposition of the density matrix $\R=\sum_kp_k\pro{k}{k}$ we have
\[\label{eq:help_traceX2}
\tr \big[\hat{P}_0\dR\big]\!=\!\tr \big[\!\hat{P}_0\big(\sum_k\mathrm{d}p_k\pro{k}{k}\!+\!p_k\pro{\mathrm{d}k}{k}\!+\!p_k\pro{k}{\mathrm{d}k}\big)\!\big]=0.
\]
The above expression vanishes because in every term
\begin{itemize}
  \item either $p_k=0$ and then also $\mathrm{d}p_k=0$ because $p_k$ achieves a local minimum,
  \item or $p_k>0$ and then by definition ${\hat{P}}_0\ket{k}=0$.
\end{itemize}
Combining Eqs.~\eqref{eq:help_traceX1} and~\eqref{eq:help_traceX2} we have
\[
\tr \hat{X}=0.
\]

Using Eqs.~\eqref{eq:X_Y_1}, \eqref{eq:X_Y_2}, and \eqref{eq:X} we obtain
\[
\tr\hat{Y}=-\sum_{p_k>0}\frac{\bra{k}\hat{X}^2\ket{k}}{2p_k}
=-\sum_{p_k>0,p_l>0}\frac{1}{4}\frac{|\bra{k}\dR\ket{l}|^2}{p_k+p_l}.
\]
Using definition~\eqref{def:bures_distance} we find $d_B^2(\R,\R+\dR)=-2(\tr\hat{X}+\tr\hat{Y})$ which gives the result for the infinitesimal Bures distance
\[\label{eq:inccorect_Bures}
(ds)_B^2=d_B^2(\R,\R+\dR)=\frac{1}{2}\sum_{p_k>0,p_l>0}\frac{|\bra{k}\dR\ket{l}|^2}{p_k+p_l}.
\]

Note that in contrast to the result of~\cite{sommers2003bures} who arrived at $(ds)_B^2=d_B^2(\R,\R+\dR)=\frac{1}{2}\sum_{p_k+p_l>0}\frac{|\bra{k}\dR\ket{l}|^2}{p_k+p_l}$, by applying the same argumentation in detail we arrived at the result where mixed terms for which either $p_k>0,p_l=0$ or $p_k=0,p_l>0$ are missing. Such a result does not seem to give the result for the metric tensor $g^{ij}$ that we would expect to obtain: we would expect something similar or identical to the quantum Fisher information matrix~\eqref{QFI_multi}. In fact, plainly inserting $\dR=\sum_i\partial_i\R\de_i$ into Eq.~\eqref{eq:inccorect_Bures} yields
\[\label{g_incorrect}
g^{ij}(\be)=\frac{1}{2}\sum_{p_k>0,p_l> 0}\frac{\Re(\langle k|\partial_i\R_{\be}|l\rangle\langle l|\partial_j\R_{\be}|k\rangle)}{p_k+p_l},
\]
which is not compatible with the quantum Fisher information matrix~\eqref{QFI_multi} as long as $\R$ is singular. However, we are getting this unexpected result because we misused the inherent assumption that $\dR\sim\bdeps$ is the first order correction, while the defining relation for the Bures metric~\eqref{eqn:bures} is defined by the second order terms in $\bdeps$. A simple fix by expanding $\dR$ into the second order, $\dR=\sum_i\partial_i\R\de_i+\frac{1}{2}\sum_{ij}\partial_{ij}\R\de_i\de_j$, also does not work and leads to the same result~\eqref{g_incorrect}. This result for $g^{ij}$ is, however, incorrect. This is simply because the proof in which we derived an expression for $d_B^2(\R,\R+\dR)$ is not applicable for deriving the expression for the Bures metric $g^{ij}$. This is because if we wanted to expand $\R_B$ up to the second order in $\bdeps$ there would be extra terms in the second expression in Eq.~\eqref{eq:X_Y_0}. These extra terms can be found in an equivalent of this equation, in Eq.~\eqref{eq:discrepancy_between_proofs}. Considering these extra terms then lead to the correct expression for the Bures metric as given by theorem~\ref{thm:theorem1},
\[
\begin{split}
g^{ij}(\be)&=\tfrac{1}{4}H^{ij}(\be)+\tfrac{1}{2}\!\!\!\!\sum_{p_k(\be)=0}\!\!\!\!\partial_{ij}p_k(\be)\\
&=
\tfrac{1}{2}\!\!\!\!\!\sum_{p_k+p_l> 0}\!\!\!\!\!\frac{\Re(\langle k|\partial_i\R_{\be}|l\rangle\langle l|\partial_j\R_{\be}|k\rangle)}{p_k+p_l}+\tfrac{1}{2}\!\!\!\!\sum_{p_k(\be)=0}\!\!\!\!\partial_{ij}p_k(\be).
\end{split}
\]

\section{Proof of theorem~\ref{thm:theorem1}}\label{app:theorem1}
\begin{proof}
%general idea
We generalize proof from Ref.~\cite{hubner1992explicit} to include singular density matrices. By combining the defining relation for the Bures metric~\eqref{eqn:bures}, the definition of the Bures distance~\eqref{def:bures_distance}, and the definition of the Uhlmann fidelity~\eqref{def:Uhlmann_Fidelity} we obtain an expression for the Bures metric,
\[\label{eq:bures_metric_fid}
\sum_{i,j}g^{ij}(\be)\mathrm{d}\epsilon_i\mathrm{d}\epsilon_j=2\Big(1-\tr\Big[\sqrt{\sqrt{\R_{\be}}\R_{\be+\bdeps}\sqrt{\R_{\be}}}\Big]\Big).
\]

%A is in the subspace
We define operator $\hat{O}(\bdeps):=\sqrt{\R_{\be}}\R_{\be+\bdeps}\sqrt{\R_{\be}}$. Because operator $\hat{O}(\bdeps)$ is given by applying $\sqrt{\R_{\be}}=\sum_k\sqrt{p_k}\pro{k}{k}$ on both sides of $\R_{\be+\bdeps}$, it can be written as
\[
\hat{O}(\bdeps)=\sum_{p_k>0,\,p_l>0}o_{kl}(\bdeps)\pro{k}{l}.
\]
(We omitted writing the explicit dependence on $\be$.) As a result, this operator clearly belongs to the subspace of linear operators acting on the Hilbert space spanned by eigenvectors associated with non-zero eigenvalues $p_k$, i.e., $\hat{O}\in\mathcal{L}(\HS_{>0})$, where $\HS_{>0}:=\mathrm{span}\{\ket{k}\}_{p_k>0}$. Now we define its square root $\hat{A}(\bdeps)$,
\[\label{eq:A_definition}
\hat{A}(\bdeps)\hat{A}(\bdeps)=\hat{O}(\bdeps).
\]
Because $\hat{O}\in\mathcal{L}(\HS_{>0})$, also $\hat{A}\in\mathcal{L}(\HS_{>0})$ together with all of its derivatives. To show that, we assume that $\hat{O}$ has a spectral decomposition\footnote{Spectral decomposition exists, because $\hat{O}$ is a Hermitian operator.} $\hat{O}(\bdeps)=\sum_mo_m^{\mathrm{diag}}(\bdeps) P_m(\bdeps)$, where $P_m(\bdeps)=\sum_{p_k>0,\,p_l>0}c^{(m)}_{kl}(\bdeps)\pro{k}{l}$. Such an expression is valid because operator $\hat{O}$ lies in the previously mentioned subspace $\mathcal{L}(\HS_{>0})$. The square root is then given by
\[
\begin{split}
\hat{A}(\bdeps)&=\sum_m\sqrt{o_m^{\mathrm{diag}}(\bdeps)} P_m(\bdeps)\\
&=\sum_{p_k>0,\,p_l>0}\left(\sum_{m}\sqrt{o_m^{\mathrm{diag}}(\bdeps)} c^{(m)}_{kl}(\bdeps)\right)\pro{k}{l}.
\end{split}
\]
Clearly, any derivatives of $\hat{A}(\bdeps)$ with respect to $\de_i$ will change only the factors, so the resulting operator will still remain in the same subspace $\mathcal{L}(\HS_{>0})$.

%Bures metric with A
From Eq.~\eqref{eq:bures_metric_fid} we obtain
\[
\sum_{i,j}g^{ij}(\be)\mathrm{d}\epsilon_i\mathrm{d}\epsilon_j=2(1-\tr[A(\bdeps)]),
\]
which gives an expression for elements of the Bures metric,
\[\label{eq:g_with_A}
g^{ij}(\be)=-\tr[\partial_{\de_i\de_j}\!A(0)],
\]
if the second derivatives exist. For that reason we assume that $\R\in C^{(2)}$, i.e., the second derivatives of $\R$ exist and are continuous.\footnote{Actually, this assumption can be slightly weakened. We can assume that the second derivatives exists, but may not be necessarily continuous. But the continuity of the second derivatives implies $\partial_{ij}\R=\partial_{ji}\R$ which will be useful later in theorem~\ref{thm:theorem2} when discussing the continuity of the Bures metric.} To obtain these second partial derivatives we rewrite Eq.~\eqref{eq:A_definition} while expanding $\R_{\be+\bdeps}$ around point $\be$,
\[\label{eq:A_definition2}
\hat{A}(\bdeps)\hat{A}(\bdeps)=\sqrt{\R}\bigg(\R+\sum_k\partial_k\R\ \de_k + \frac{1}{2} \sum_{k,l}\partial_{kl}\R\ \de_k \de_l\bigg)\sqrt{\R}.
\]
By differentiating this equation with respect to $\de_i$  and setting $\bdeps=0$ we obtain
\[
\partial_{\de_i}\!\hat{A}(0)\,\R+\R\,\partial_{\de_i}\!\hat{A}(0)=\sqrt{\R}\partial_i\R\sqrt{\R},
\]
where we used $\hat{A}(0)=\R$. By applying $\bra{k}\ \ket{l}$ for $p_k>0$ and $p_l>0$ we obtain the matrix elements of $\partial_{\de_i}\!\hat{A}(0)$,
\[\label{eq:elem_of_Adot}
\bra{k}\partial_{\de_i}\!\hat{A}(0)\ket{l}=\frac{\sqrt{p_kp_l}\bra{k}\partial_i\R\ket{l}}{p_k+p_l}.
\]
Elements $\bra{k}\partial_{\de_i}\!\hat{A}(0)\ket{l}$ such that $p_k=0$ or $p_l=0$ are identically zero, because as we proved earlier all derivatives of $\hat{A}$ lie in the subspace $\mathcal{L}(\HS_{>0})$. Differentiating Eq.~\eqref{eq:A_definition2} for the second time and setting $\bdeps=0$ yields\footnote{This is where our proof starts to effectively differ from finding the expression for $d_B^2(\R,\R+\dR)$. In both \cite{hubner1992explicit,sommers2003bures} the right hand side of Eq.~\eqref{eq:discrepancy_between_proofs} is set to zero, as a result of a not explicitly stated (and easy to miss) assumption that $\dR$ can be linear only in $\bdeps$, $\dR\sim\bdeps$.}
\begin{multline}\label{eq:discrepancy_between_proofs}
    \partial_{\de_i\de_j}\!\hat{A}(0)\, \R+\{\partial_{\de_i}\!\hat{A}(0),\partial_{\de_j}\!\hat{A}(0)\}+\R\, \partial_{\de_i\de_j}\!\hat{A}(0)\\
=\sqrt{\R}\partial_{ij}\R\sqrt{\R},
\end{multline}
where $\{\,,\}$ denotes an anti-commutator. Now, restricting ourselves to the subspace $\mathcal{L}(\HS_{>0})$, the density matrix has the inverse matrix $\R^{-1}$ in this subspace. We multiply the above equation by this matrix and perform the trace on this subspace,
\begin{multline}\label{eq:second_derivatives_and_rho}
\tr_{\mathcal{L}(\HS_{>0})}[\rho^{-1}\{\partial_{\de_i}\hat{A}(0),\partial_{\de_j}\hat{A}(0)\}]\\
+2\tr_{\mathcal{L}(\HS_{>0})}[\partial_{\de_i\de_j}\hat{A}(0)]
=\tr_{\mathcal{L}(\HS_{>0})}[\partial_{ij}\R].
\end{multline}
Because all derivatives of $\hat{A}$ lie in the subspace $\mathcal{L}(\HS_{>0})$, traces of such operators are identical on both the subspace and the full space, $\tr_{\mathcal{L}(\HS_{>0})}[\partial_{\de_i\de_j}\hat{A}(0)]=\tr[\partial_{\de_i\de_j}\hat{A}(0)]$. However that is not necessarily true for the last element for which $\tr_{\mathcal{L}(\HS_{>0})}[\partial_{ij}\R]=\tr[\hat{P}_{\HS_{>0}}\partial_{ij}\R]$, where $\hat{P}_{\HS_{>0}}$ denotes the projector on the Hilbert space $\HS_{>0}$. Because $\tr[\partial_{ij}\R]=0$, this term can be equivalently written as $\tr[\hat{P}_{\HS_{>0}}\partial_{ij}\R]=-\tr[\hat{P}_0\partial_{ij}\R]$, where the projector $\hat{P}_0:=\hat{I}-\hat{P}_{\HS_{>0}}$ projects onto the subspace spanned by the eigenvectors of the density matrix $\R$ associated with the zero eigenvalue.
Therefore combining  Eqs.~\eqref{eq:g_with_A} and \eqref{eq:second_derivatives_and_rho}  yields
\[\label{eq:g_in_terms_of_A}
g^{ij}%=-\tr[\partial_{\de_i\de_j}\!\hat{A}(0)]\\
=\frac{1}{2}\big(\tr_{\mathcal{L}(\HS_{>0})}[\rho^{-1}\{\partial_{\de_i}\!\hat{A}(0),\partial_{\de_j}\!\hat{A}(0)\}]
+\tr[\hat{P}_0\partial_{ij}\R]\big).
\]
The first term of the right hand side can be readily computed from Eq.~\eqref{eq:elem_of_Adot} while the antisymmetric part vanishes under the sum,
\begin{multline}\label{eq:first_term}
\tr_{\mathcal{L}(\HS_{>0})}[\rho^{-1}\{\partial_{\de_i}\!\hat{A}(0),\partial_{\de_j}\!\hat{A}(0)\}]=\\
\sum_{p_k>0,\,p_l>0}\!\!\!\!\frac{\Re(\bra{k}\partial_i\R\ket{l}\bra{l}\partial_j\R\ket{k})}{p_k+p_l}.
\end{multline}
Now we compute the second term. The second derivative of $\R$ is given by
\[\label{eq:second_derivative_rho}
\begin{split}
   \partial_{ij}&\R =\sum_{k} \partial_{ij}p_k\pro{k}{k}\\
   &+p_k\big(\pro{\partial_i k}{\partial_j k}+\pro{\partial_j k}{\partial_i k}\big)
   +p_k\big(\pro{\partial_{ij} k}{k} + \pro{k}{\partial_{ij} k}\big)\\
   &+\partial_jp_k\big(\pro{\partial_i k}{k}+\pro{k}{\partial_i k}\big)+\partial_ip_k\big(\pro{\partial_j k}{k}+\pro{k}{\partial_j k}\big).
\end{split}
\]
We stress out that the summation goes over all values of $k$, even over those for which $p_k=0$. When using the above equation to calculate $\tr[\hat{P}_0\partial_{ij}\R]$ we find that many terms vanish because:
\begin{itemize}
  \item for $k$ such that $p_k>0$, $\hat{P}_0\ket{k}=0$,
  \item for $k$ such that $p_k=0$, also $\partial_ip_k=\partial_jp_k=0$, because $p_k$ reaches the local minimum at point $\epsilon$ such that $p_k(\epsilon)=0$.
\end{itemize}
Only parts of the first two terms of Eq.~\eqref{eq:second_derivative_rho} remain,
\[\label{eq:second_term}
\begin{split}
\tr[\hat{P}_0\partial_{ij}\R]&=\sum_{p_k=0}\partial_{ij}p_k+2\!\!\!\!\!\!\sum_{p_k>0,p_l=0}\!\!\!\!\!p_k\ \Re(\braket{l}{\partial_i k}\braket{\partial_j k}{l})\\
&=\sum_{p_k=0}\partial_{ij}p_k+2\!\!\!\!\!\!\sum_{p_k>0,p_l=0}\!\!\!\!\!\!\frac{\Re(\bra{k}\partial_i\R\ket{l}\bra{l}\partial_j\R\ket{k})}{p_k+p_l}.
\end{split}
\]

%Final sum
Combining Eqs.~\eqref{eq:g_in_terms_of_A}, \eqref{eq:first_term}, and \eqref{eq:second_term}, and the expression for the quantum Fisher information matrix~\eqref{QFI_multi} we derive
\[\label{eq:Bures_metric_app}
\begin{split}
g^{ij}&=\frac{1}{2}\sum_{p_k+p_l> 0}\!\!\!\!\!\!\frac{\Re(\bra{k}\partial_i\R\ket{l}\bra{l}\partial_j\R\ket{k})}{p_k+p_l}
+\frac{1}{2}\sum_{p_k=0}\!\!\partial_{ij}p_k\\
&=\frac{1}{4}\Big(H^{ij}+2\!\!\!\sum_{\,p_k=0}\!\!\!\partial_{ij}p_k\Big),
\end{split}
\]
which considering the definition of $H_c$, Eq.~\eqref{def:continuous_QFI}, proves the theorem.
\end{proof}

\section{Proof of theorem~\ref{thm:theorem2}}\label{app:theorem2}
\begin{proof}
First we will study the neighborhood of the quantum Fisher information matrix, i.e., we will study the function $H^{ij}(\be+\bdeps)$, where we define $\bdeps=(\de_1,\dots,\de_n)$. Then we show that the (four times) Bures metric $H_c$ is given by the limits stated in the theorem. Finally we prove the continuity property stated in the theorem.

Equation~\eqref{QFI_multi} gives
\begin{equation}\label{eq:QFI_in_the_neighborhood}
\begin{split}
&H^{ij}(\be+\bdeps) \\
&=2\!\!\!\!\!\!\!\!\!\!\sum_{p_{k\,\be+\bdeps}+p_{l\,\be+\bdeps}> 0}\!\!\!\!\!\!\!\!\!\!\frac{\Re(\bra{k_{\be+\bdeps}}\partial_i\R_{\be+\bdeps}\ket{l_{\be+\bdeps}}\bra{l_{\be+\bdeps}}\partial_j\R_{\be+\bdeps}\ket{k_{\be+\bdeps}})}{p_{k\,\be+\bdeps}+p_{l\,\be+\bdeps}} \\
&=2\!\!\!\!\!\!\sum_{p_{k\,\be}+p_{l\,\be}> 0}\!\!\!\!\!\!
\frac{\Re(\bra{k_{\be}}\partial_i\R_{\be}\ket{l_{\be}}\bra{l_{\be}}\partial_j\R_{\be}\ket{k_{\be}}+\mathcal{O}(\bdeps))}{p_{k\,\be}+p_{l\,\be}+\mathcal{O}(\bdeps)} \\
&+2\!\!\!\!\!\!\!\!\!\sum_{\substack{p_{k\,\be}+p_{l\,\be}= 0, \\{p_{k\,\be+\bdeps}+p_{l\,\be+\bdeps}> 0}}}\!\!\!\!\!\!\!\!\!\frac{\Re(\bra{k_{\be+\bdeps}}\partial_i\R_{\be+\bdeps}\ket{l_{\be+\bdeps}}\bra{l_{\be+\bdeps}}\partial_j\R_{\be+\bdeps}\ket{k_{\be+\bdeps}})}{p_{k\,\be+\bdeps}+p_{l\,\be+\bdeps}}.
\end{split}
\end{equation}
$\mathcal{O}(\bdeps)$ (the big O notation) denotes the remainder after expanding $p_{k\,\be+\bdeps}+p_{l\,\be+\bdeps}$.  In the above equation we have chosen $\bdeps$ small enough such that $\abs{\mathcal{O}(\bdeps)}<p_{k\,\be}+p_{l\,\be}$ for $p_{k\,\be}+p_{l\,\be}> 0$.

Using $\R_{\be}\in C^{(2)}$ (which is assumed in all theorems in this paper) we can write the following expansion:
\begin{align}\label{eq:expansions_prhok}
p_{k\,\be+\bdeps}&=p_{k}\!+\!\sum_m\partial_mp_{k}\de_m\!+\!\frac{1}{2}\!\sum_{m,n}\partial_{mn}p_{k}\de_m\de_n+\mathcal{O}(\bdeps^3),\nonumber\\
\R_{\be+\bdeps}&=\R+\sum_m\partial_m\R\de_m+\frac{1}{2}\sum_{m,n}\partial_{mn}\R\de_m\de_n+\mathcal{O}(\bdeps^3),\nonumber\\
\ket{k_{\be+\bdeps}}&=\ket{k}\!+\!\sum_m\ket{\partial_mk}\de_m\!+\!\frac{1}{2}\!\sum_{m,n}\ket{\partial_{mn}k}\de_m\de_n\!+\!\mathcal{O}(\bdeps^3).
\end{align}
$\mathcal{O}(\bdeps^3)$ denotes the remainder that consists of sums of multiples of three or more elements of vector $\bdeps$. Using these expansions and Eq.~\eqref{eq:second_derivative_rho}, for $k$, $l$ such that $p_{k\,\be}+p_{l\,\be}= 0$ we have $\partial_ip_{k\,\be}+\partial_ip_{l\,\be}=0$ and
\begin{align}\label{eq:elements_p_and_kl}
&p_{k\,\be+\bdeps}+p_{l\,\be+\bdeps}=\frac{1}{2}\sum_{m,n}(\partial_{mn}p_{k}+\partial_{mn}p_{l})\de_m\de_n\!+\!\mathcal{O}(\bdeps^3),\nonumber\\
&\bra{k_{\be+\bdeps}}\partial_i\R_{\be+\bdeps}\ket{l_{\be+\bdeps}}\nonumber\\
&\ \ = \sum_m\big(\bra{\partial_mk}\partial_i\R\ket{l}\!+\!\bra{k}\partial_i\R\ket{\partial_ml}\!+\!\bra{k}\partial_{im}\R\ket{l}\big)\de_m\!+\!\mathcal{O}(\bdeps^2)\nonumber\\
&\ \ =\delta_{kl}\sum_m\partial_{im}p_k\de_m+\mathcal{O}(\bdeps^2),
\end{align}
where we used $\braket{k}{\partial_i j}=-\braket{\partial_i k}{j}$ which comes from the orthonormality condition. Inserting Eqs.~\eqref{eq:expansions_prhok} for $p_k+p_l>0$ and Eqs.~\eqref{eq:elements_p_and_kl} for $p_k+p_l=0$ into Eq.~\eqref{eq:QFI_in_the_neighborhood} yields
\[\label{eq:H_expanded}
\begin{split}
H^{ij}(\be&+\bdeps)=2\!\!\!\!\!\!\sum_{p_{k}+p_{l}> 0}\!\!\!\!\!\!
     \frac{\Re(\bra{k}\partial_i\R\ket{l}\bra{l}\partial_j\R\ket{k})+\mathcal{O}(\bdeps)}{p_{k}+p_{l}+\mathcal{O}(\bdeps)}\\ &+2\!\!\!\!\!\!\!\!\!\!\!\!\!\!\!\!\sum_{\substack{p_{k}= 0, \\\sum_{s,t}\partial_{st}p_{k}\de_s\de_t> 0}}\!\!\!\!\!\!\!\!\!\!\!\!\!\!\!\!\frac{\big(\sum_{m}\partial_{im}p_k \de_m\big) \big(\sum_{n}\partial_{jn}p_k\de_n\big)+\mathcal{O}(\bdeps^3)}{\sum_{s,t}\partial_{st}p_{k}\de_s\de_t+\mathcal{O}(\bdeps^3)}.
\end{split}
\]
By setting $\bdeps=\de\boldsymbol{e}_i$ and performing the limit we find
%By redefining the quantum Fisher information~\eqref{eq:H_expanded} by this limit at the problematic points we ensure that this new function $H_c^{ij}$ will be continuous in $\epsilon_i$ and $\epsilon_j$.
\[\label{eq:proof_of_thm:theorem2}
\begin{split}
\lim_{\de\rightarrow 0}H^{ij}(\be+\de \boldsymbol{e}_i)&=H^{ij}(\be)+2\!\!\!\!\sum_{\substack{p_{k}(\be)= 0, \\\partial_{ii}p_{k}(\be)> 0}}\!\!\!\!\partial_{ij}p_k(\be)\\
&=H^{ij}(\be)+2\!\!\sum_{p_{k}(\be)= 0}\!\!\partial_{ij}p_k(\be),
\end{split}
\]
which is equal to $H_c^{ij}(\be)$ according to theorem~\ref{thm:theorem1}. The second equality in Eq.~\eqref{eq:proof_of_thm:theorem2} is due to the Sylvester's criterion for positive semi-definite matrices~\cite{horn2012matrix} which gives $\partial_{ii}p_k(\be)\partial_{jj}p_k(\be)-\partial_{ij}p_k(\be)^2\geq0$, i.e., for $p_k(\be)=0$ and $\partial_{ii}p_k(\be)=0$ also $\partial_{ij}p_k(\be)=0$.
 The same equality holds for $\bdeps=\de \boldsymbol{e}_j$ which proves the first part of the theorem.

%continuity
Now we are going to prove the continuity property stated in the theorem. We set $\bdeps=\de \boldsymbol{u}$, where $\boldsymbol{u}=(u_1,\dots,u_n)$ is a unit vector. Using
\[
\begin{split}
\frac{x+\mathcal{O}(\de)}{y+\mathcal{O}(\de)}&=\frac{x}{y\big(1+\frac{\mathcal{O}(\de)}{y}\big)}+\frac{\mathcal{O}(\de)}{y+\mathcal{O}(\de)}\\
&=\frac{x}{y}\bigg(1-\frac{\mathcal{O}(\de)}{y}\bigg)+\frac{\mathcal{O}(\de)}{y+\mathcal{O}(\de)}=\frac{x}{y}+\mathcal{O}(\de),
\end{split}
\]
which holds for any $y\neq 0$, and Eq.~\eqref{eq:H_expanded} while assuming the number of eigenvalues $p_k$ is finite we derive
\begin{multline}\label{eq:H_expanded2}
H^{ij}(\be+\de\boldsymbol{u})=\\
H^{ij}(\be)+\!2\!\!\!\!\!\!\!\!\!\!\!\!\!\!\!\!\sum_{\substack{p_{k}= 0, \\\sum_{s,t}\partial_{st}p_{k}u_su_t> 0}}
\!\!\!\!\!\!\!\!\!\!\!\!\!\!\!\!\frac{\big(\sum_{m}\partial_{im}p_k u_m\big) \big(\sum_{n}\partial_{jn}p_ku_n\big)}{\sum_{s,t}\partial_{st}p_{k}u_su_t}+\mathcal{O}(\de).
\end{multline}
By definition, the function $H_c^{ij}$ is continuous in $\epsilon_i$ at point $\be$ when
\[
(\forall \gamma>0)(\exists \delta>0)(\forall \de, \abs{\de}<\delta)(\abs{H_c^{ij}(\be+\de\boldsymbol{e}_i)-H_c^{ij}(\be)}<\gamma).
\]
Setting $\boldsymbol{u}:=\boldsymbol{e}_i$, using theorem~\ref{thm:theorem1} and Eq.~\eqref{eq:H_expanded2} we derive
\[\label{eq:long_proof_of_limit}
\begin{split}
&\abs{H_c^{ij}(\be+\de\boldsymbol{e}_i)-H_c^{ij}(\be)}
=\abs{H^{ij}(\be+\de\boldsymbol{e}_i)-H^{ij}(\be)\\
&+2\!\!\!\!\!\!\!\!\!\!\!\!\sum_{p_k(\be+\de\boldsymbol{e}_i)=0}\!\!\!\!\!\!\!\!\!\!\!\!\partial_{ij}p_k(\be+\de\boldsymbol{e}_i)
-2\!\!\!\!\sum_{p_k(\be)=0}\!\!\!\!\partial_{ij}p_k(\be)}\\
&=\Big|\ 2\!\!\!\!\!\!\!\sum_{\substack{p_{k}(\be)= 0, \\\partial_{ii}p_k(\be)> 0}}\!\!\!\!\!\frac{\partial_{ii}p_k(\be) \partial_{ji}p_k(\be) \de^2}{\partial_{ii}p_k(\be) \de^2}+\mathcal{O}(\de)-2\!\!\!\!\sum_{p_k(\be)=0}\!\!\!\!\partial_{ij}p_k(\be)\\
&+2\!\!\!\!\!\!\!\!\!\!\!\!\sum_{p_k(\be+\de\boldsymbol{e}_i)=0}\!\!\!\!\!\!\!\!\!\!\!\!\big(\partial_{ij}p_k(\be+\de\boldsymbol{e}_i)-\partial_{ij}p_k(\be)\big)
+2\!\!\!\!\!\!\!\!\!\!\!\!\sum_{p_k(\be+\de\boldsymbol{e}_i)=0}\!\!\!\!\!\!\!\!\!\!\!\!\partial_{ij}p_k(\be)\Big|\\
&\leq2\!\!\!\!\!\!\!\!\!\!\!\!\sum_{p_k(\be+\de\boldsymbol{e}_i)=0}\!\!\!\!\!\!\!\!\!\!\!\!|\partial_{ij}p_k(\be+\de\boldsymbol{e}_i)-\partial_{ij}p_k(\be)|\\
&+2\!\!\!\!\!\!\!\!\!\!\!\!\sum_{p_k(\be+\de\boldsymbol{e}_i)=0}\!\!\!\!\!\!\!\!\!\!\!\!|\partial_{ij}p_k(\be)|+|\mathcal{O}(\de)|<\frac{\gamma}{3}+\frac{\gamma}{3}+\frac{\gamma}{3}=\gamma.
\end{split}
\]
The first inequality is the triangle inequality. $\partial_{ij}p_k(\be)=\partial_{ji}p_k(\be)$ follows from $\R_{\be}\in C^{(2)}$ ($p_k\in C^{(2)}$, i.e., the second derivative is continuous). The same property also implies $2\sum_{p_k(\be+\de\boldsymbol{e}_i)=0}|\partial_{ij}p_k(\be+\de\boldsymbol{e}_i)-\partial_{ij}p_k(\be)|<\frac{\gamma}{3}$ for small enough $\de$ (i.e., for all $\de$ such that $|\de|<\delta$ where $\delta>0$ is some sufficiently small radius). $2\sum_{p_k(\be+\de\boldsymbol{e}_i)=0}|\partial_{ij}p_k(\be)|<\frac{\gamma}{3}$ holds for small enough $\de$ because
\begin{itemize}
  \item either $p_k(\be)=0$, and then from $0=p_k(\be+\de\boldsymbol{e}_i)=p_k(\be)+\frac{1}{2}\partial_{ii}p_k(\be)\de^2=\frac{1}{2}\partial_{ii}p_k(\be)\de^2$ follows $\partial_{ii}p_k(\be)=0$. Using the Sylvester's criterion again we have $\partial_{ij}p_k(\be)=0$, i.e., the corresponding term in the sum is zero.
  \item or $p_k(\be)>0$, and then the continuity of $p_k$ implies that for small enough $\de$ also $p_k(\be+\de\boldsymbol{e}_i)>0$, i.e., the corresponding term does not appear in the sum.
\end{itemize}
At last,
$|\mathcal{O}(\de)|<\frac{\gamma}{3}$ comes from the definition of $\mathcal{O}(\de)$, which can be made arbitrarily small, i.e., we can choose $\delta$ such that for all $\de$, $\abs{\de}<\delta$, $|\mathcal{O}(\de)|<\frac{\gamma}{3}$, which proves the theorem.
\end{proof}

\section{Proof of theorem~\ref{thm:discontinuities}}\label{app:discontinuities}
\begin{proof}
We are going to prove Eq.~\eqref{eq:meaning_of_Delta} first. Other statements of the theorem will follow easily. To do that we generalize the second part of the proof of theorem~\ref{thm:theorem2}. Combining theorem~\ref{thm:theorem1}, Eq.~\eqref{eq:H_expanded2}, and definition~\eqref{eq:Delta} yields
\[\label{eq:set_of_eq_and_ineq1}
\begin{split}
&|H_c^{ij}(\be+\de\,\bu)-H_c^{ij}(\be)-\Delta_{\boldsymbol{u}}^{ij}(\be)|\\
&=\big|\ 2\!\!\!\!\!\!\!\!\!\!\!\!\sum_{p_k(\be+\de\bu)=0}\!\!\!\!\!\!\!\!\!\!\!\!\partial_{ij}p_k(\be+\de\bu)
-\!2\!\!\!\!\!\!\!\!\!\!\!\!\!\!\!\!\sum_{\substack{p_{k}(\be)= 0, \\\sum_{s,t}\partial_{st}p_{k}(\be)u_su_t= 0}}
\!\!\!\!\!\!\!\!\!\!\!\!\!\!\!\!\partial_{ij}p_k(\be)
+\mathcal{O}(\de)\big|\\
&\leq\ 2\!\!\!\!\!\!\!\!\!\!\!\!\sum_{p_k(\be+\de\bu)=0}\!\!\!\!\!\!\!\!\!\!\!\!|\partial_{ij}p_k(\be+\de\bu)-\partial_{ij}p_k(\be)|\\
&+2\ \big|\ \!\!\!\!\!\!\!\!\!\!\!\!\sum_{p_k(\be+\de\bu)=0}\!\!\!\!\!\!\!\!\!\!\!\!\partial_{ij}p_k(\be)-\!\!\!\!\!\!\!\!\!\!\!\!\!\!\!\!\!\sum_{\substack{p_{k}(\be)= 0, \\\sum_{s,t}\partial_{st}p_{k}(\be)u_su_t= 0}}
\!\!\!\!\!\!\!\!\!\!\!\!\!\!\!\!\partial_{ij}p_k(\be)\big|
+\big|\mathcal{O}(\de)\big|\\
&<\frac{\gamma}{3}+\frac{\gamma}{3}+\frac{\gamma}{3}=\gamma,
\end{split}
\]
which proves Eq.~\eqref{eq:meaning_of_Delta}. $2\sum_{p_k(\be+\de\bu)=0}|\partial_{ij}p_k(\be+\de\bu)-\partial_{ij}p_k(\be)|<\frac{\gamma}{3}$ in Eq.~\eqref{eq:set_of_eq_and_ineq1} comes from the continuity of second derivatives. $2\big|\sum_{p_k(\be+\de\bu)=0}\partial_{ij}p_k(\be)-\sum_{\substack{p_{k}(\be)= 0, \\\sum_{s,t}\partial_{st}p_{k}(\be)u_su_t= 0}}
\partial_{ij}p_k(\be)\big|<\frac{\gamma}{3}$ because
\begin{itemize}
  \item either $p_k(\be)=0$ and then from $0=p_k(\be+\de\bu)=p_k(\be)+\frac{1}{2}\sum_{s,t}\partial_{st}p_{k}(\be)u_su_t\de^2$ follows $\sum_{s,t}\partial_{st}p_{k}(\be)u_su_t=0$. In this case the element $\partial_{ij}p_k(\be)$ in the first sum $\sum_{p_k(\be+\de\bu)=0}\partial_{ij}p_k(\be)$ is compensated by the element $\partial_{ij}p_k(\be)$ in the second sum.
  \item or $p_k(\be)>0$ and then the continuity of $p_k$ implies that for small enough $\de$ also $p_k(\be+\de\bu)>0$, i.e., for small enough $\de$ term $\partial_{ij}p_k(\be)$ does not appear in the first sum. The corresponding term also does not appear in the second sum because only terms for which $p_k(\be)=0$ are counted.
\end{itemize}
In total we have
\[
\begin{split}
&~ 2\ \big|\ \!\!\!\!\!\!\!\!\!\!\!\!\sum_{p_k(\be+\de\bu)=0}\!\!\!\!\!\!\!\!\!\!\!\!\partial_{ij}p_k(\be)-\!\!\!\!\!\!\!\!\!\!\!\!\!\!\!\!\!\sum_{\substack{p_{k}(\be)= 0, \\\sum_{s,t}\partial_{st}p_{k}(\be)u_su_t= 0}}
\!\!\!\!\!\!\!\!\!\!\!\!\!\!\!\!\partial_{ij}p_k(\be)\big|\\
&\leq 2\ \big|\ \!\!\!\!\!\!\!\!\!\!\!\!\sum_{\substack{p_{k}(\be)= 0, \\{p_k(\be+\de\bu)=0}}}\!\!\!\!\!\!\!\!\!\!\!\!\partial_{ij}p_k(\be)-\!\!\!\!\!\!\!\!\!\!\!\!\!\!\!\sum_{\substack{p_{k}(\be)= 0, \\\sum_{s,t}\partial_{st}p_{k}(\be)u_su_t= 0}}
\!\!\!\!\!\!\!\!\!\!\!\!\!\!\partial_{ij}p_k(\be)\big|
+2\ \big|\ \!\!\!\!\!\!\!\!\!\!\!\!\sum_{\substack{p_{k}(\be)> 0, \\{p_k(\be+\de\bu)=0}}}\!\!\!\!\!\!\!\!\!\!\!\!\partial_{ij}p_k(\be)\big|\\
&\leq 0+\frac{\gamma}{3}.
\end{split}
\]
$|\mathcal{O}(\de)|<\frac{\gamma}{3}$ comes from the definition of $\mathcal{O}(\de)$, which can be made arbitrarily small.

By definition, $H_c$ is continuous in $\epsilon_l$ if and only if $\lim_{\de\rightarrow 0}H_c^{ij}(\be+\de\,\boldsymbol{e}_l)-H_c^{ij}(\be)=0$, which together with Eq.~\eqref{eq:meaning_of_Delta} proves the second part of the theorem.

The function $H_c$ is continuous at point $\be$ if and only if
\[
(\forall \gamma>0)(\exists \delta>0)(\forall \bdeps, \norm{\bdeps}<\delta)(\abs{H_c^{ij}(\be+\bdeps)-H_c^{ij}(\be)}<\gamma).
\]
By negating this statement we obtain
\[\label{eq:negation}
(\exists \gamma>0)(\forall \delta>0)(\exists \bdeps, \norm{\bdeps}<\delta)(\abs{H_c^{ij}(\be+\bdeps)-H_c^{ij}(\be)}\geq\gamma).
\]
Equation~\eqref{eq:meaning_of_Delta} yields
\[
|H_c^{ij}(\be+\de\,\bu)-H_c^{ij}(\be)-\Delta_{\boldsymbol{u}}^{ij}(\be)|<\frac{|\Delta_{\boldsymbol{u}}^{ij}(\be)|}{2},
\]
for all $\de$ such that $|\de|<\delta_1$, where $\delta_1>0$ is some sufficiently small radius.
To show the validity of Eq.~\eqref{eq:negation} we choose $\gamma:=\frac{|\Delta_{\boldsymbol{u}}^{ij}(\be)|}{2}$ and $\bdeps:=\de\bu$ with $\de$ any such that $|\de|<\min\{\delta,\delta_1\}$. Then
\[
\begin{split}
&\abs{H_c^{ij}(\be+\bdeps)-H_c^{ij}(\be)}\\
&\geq||\Delta_{\boldsymbol{u}}^{ij}(\be)|-|H_c^{ij}(\be+\de\,\bu)-H_c^{ij}(\be)-\Delta_{\boldsymbol{u}}^{ij}(\be)||\\
&\geq|\Delta_{\boldsymbol{u}}^{ij}(\be)|-\frac{|\Delta_{\boldsymbol{u}}^{ij}(\be)|}{2}=\frac{|\Delta_{\boldsymbol{u}}^{ij}(\be)|}{2}=\gamma,
\end{split}
\]
where the first inequality is a version $|a-b|\geq ||a|-|b||$ of the triangle inequality. This proves the first part of the theorem.
\end{proof}

\bibliographystyle{apsrev4-1}
\bibliography{Discontinuities_of_QFI_BiBTeX}

\end{document}